\documentclass[12pt]{article} 
\usepackage[top=3cm,bottom=3cm,left=3.5cm,right=3.5cm]{geometry}
\usepackage{mathptmx}
\usepackage{mathrsfs}
\usepackage{amsmath}
\usepackage{bm}
\usepackage{fancyhdr}
\usepackage[colorlinks,
citecolor=blue,linkcolor=red]{hyperref}
\usepackage{amsfonts}
\usepackage{natbib}
\usepackage{amsthm}
\usepackage{amssymb}
\usepackage{color}
\usepackage{lscape}
\usepackage{rotating}
\usepackage{subfigure}
\usepackage{graphicx}
\usepackage{placeins}
\usepackage{multirow}
\usepackage{tabularx,booktabs,caption}
\newcolumntype{C}[1]{>{\Centering}m{#1}}

\makeatletter
\AtBeginDocument{%
  \expandafter\renewcommand\expandafter\subsection\expandafter
    {\expandafter\@fb@secFB\subsection}%
  \newcommand\@fb@secFB{\FloatBarrier
    \gdef\@fb@afterHHook{\@fb@topbarrier \gdef\@fb@afterHHook{}}}%
  \g@addto@macro\@afterheading{\@fb@afterHHook}%
  \gdef\@fb@afterHHook{}%
}
\makeatother

\renewcommand{\hat}{\widehat}

\newcommand{\NB}{\mathrm{NB}}
\newcommand{\BHerm}{\mathrm{BHerm}}

\newcommand{\BVB}{\mathrm{BVB}}
\newcommand{\BNB}{\mathrm{BNB}}
\newcommand{\Bin}{\mathrm{Bin}}
\newcommand{\Poi}{\mathrm{Poi}}
\newcommand{\BPoi}{\mathrm{BPoi}}
\newcommand{\E}{\mathbb{E}}
\newcommand{\V}{\mathbb{V}}
\newcommand{\R}{\mathbb{R}}
\newcommand{\N}{\mathbb{N}}
\newcommand{\Z}{\mathbb{Z}}

\newcommand{\fX}{\mathbf{X}}
\newcommand{\fZ}{\mathbf{Z}}

\usepackage{enumerate}
\usepackage{float}
\usepackage{amssymb}

\renewcommand{\hat}{\widehat}

\newtheorem{theorem}{Theorem}

\theoremstyle{definition}

\newtheorem{rmk}{Remark}

\renewcommand{\P}{\mathbb{P}}
\renewcommand{\R}{\mathbb{R}}

\newcommand{\ba}{\begin{equation}}
\newcommand{\ea}{\end{equation}}

\usepackage{epstopdf}
\usepackage{enumitem}
\usepackage{multirow}
\usepackage{booktabs}

\usepackage{dsfont}
\newcommand{\indfct}{\mathds{1}}

\begin{document}

\title{Novel Stein-type Characterizations of Bivariate Count Distributions with Applications}

\author{Shaochen Wang \thanks{School of Mathematics, South China University of Technology, Guangzhou, P.\ R.\ China, E-Mail:
\href{mailto:mascwang@scut.edu.cn}{\nolinkurl{mascwang@scut.edu.cn}}. ORCID:\href{https://orcid.org/0000-0003-1389-320X}{\nolinkurl{0000-0003-1389-320X}}.}
\and
Christian H.\ Wei\ss{}\thanks{Department of Mathematics and Statistics, Helmut Schmidt University, 22043 Hamburg, Germany. E-Mail: \href{mailto:weissc@hsu-hh.de}{\nolinkurl{weissc@hsu-hh.de}}. ORCID: \href{https://orcid.org/0000-0001-8739-6631}{\nolinkurl{0000-0001-8739-6631}}.}
}

\date{\today}
\maketitle

\vspace{-1cm}

\begin{abstract}
The derivation and application of Stein identities have received considerable research interest in recent years, especially for continuous or discrete-univariate distributions. In this paper, we complement the existing literature by deriving and investigating Stein-type characterizations for the three most common types of bivariate count distributions, namely the bivariate Poisson, binomial, and negative-binomial distribution. Then, we demonstrate the practical relevance of these novel Stein identities by a couple of applications, namely the deduction of sophisticated moment expressions, of flexible goodness-of-fit tests, and of novel tests for the symmetry of bivariate count distributions. The paper concludes with an analysis of real-world data examples.
\end{abstract}

\begin{quote}
\noindent
{\sl Keywords}: bivariate binomial distribution; bivariate negative-binomial distribution; bivariate Poisson distribution; goodness-of-fit tests; Stein characterization; symmetry tests.
\end{quote}

\begin{quote}
\noindent
{\sl MSC 2020}:
60E05; 
62F03. 
\end{quote}

\thispagestyle{empty}
\pagenumbering{gobble}

\pagenumbering{arabic}

\setcounter{page}{1}

\section{Introduction}
\label{Introduction}
The idea to uniquely characterize a distribution family by certain moment identities dates back to \citet{stein72,stein86}, and one thus speaks of ``Stein's method'' and ``Stein identities'' in this regard. Stein identities have been developed for various distribution families, see \citet{sudheesh09,sudheesh12,landsman16,wang2023} for examples and references. In particular, such Stein identities have been derived for several \emph{univariate} count distributions, i.e., referring to some discrete-quantitative random variable~$X$ with range contained in $\mathbb{N}_0 = \{0, 1, \ldots\}$.
Probably the most relevant cases for practice are, see \citet{sudheesh12}:
\begin{description}
	\item[Poisson (Poi) distribution:] $X\sim\Poi(\mu)$ with $\mu>0$ if and only if
\ba
\label{SteinChen}
\E\big[X\, f(X)\big]\ =\ \mu\, \E\big[f(X+1)\big]
\ea
holds for all bounded functions $f:\N_0\to\R$;
	\item[binomial (Bin) distribution:] $X\sim\Bin(N,\pi)$ with $N\in\N=\{1,2,\ldots\}$ and $\pi\in (0,1)$ if and only if
\ba
\label{SteinBin}
(1-\pi)\,\E\big[X\,f(X)\big] = \pi\,\E\big[(N-X)\,f(X+1)\big]
\ea
holds for all bounded functions $f:\N_0\to\R$;
	\item[negative-binomial (NB) distribution:] $X\sim\NB(\nu,\pi)$ with $\nu>0$ and $\pi\in (0,1)$ if and only if
\ba
\label{SteinNB}
\E\big[X\,f(X)\big]\ =\ (1-\pi)\, \E\big[(\nu+X)\,f(X+1)\big]
\ea
holds for all bounded functions $f:\N_0\to\R$.
\end{description}
The first identity, \eqref{SteinChen}, is usually referred to as the Stein--Chen identity, in view of the contribution by \citet{chen75}.

\smallskip
Univariate Stein identities, such as those in \eqref{SteinChen}--\eqref{SteinNB}, have found various applications in the literature. They have been used for moment computations \citep{weiss2022}, for developing goodness-of-fit tests \citep{betsch22,weiss2023} as well as control charts \citep{weiss2025}, for generalized types of moment estimation \citep{wang2023,nik2024,ebner25}, and they have also been applied in the time-series case \citep{aleksandrov22,aleksandrov24}. In what follows, however, we are not concerned with univariate count random variables, but we consider \emph{bivariate} count distributions instead. More precisely, we extend the aforementioned univariate Stein identities \eqref{SteinChen}--\eqref{SteinNB} to the respective bivariate case, namely by considering the bivariate Poisson distribution in Section~\ref{Stein Characterization of Bivariate Poisson Distribution}, the bivariate binomial distribution (of type~I) in Section~\ref{Stein Characterization of Bivariate Binomial Distribution}, and the bivariate negative-binomial distribution in Section~\ref{Stein Characterization of Bivariate Negative-binomial Distribution}. Afterwards, Section~\ref{Applications} presents various directions for applying the bivariate Stein characterizations. We demonstrate that our novel Stein identities allow to derive demanding moment expressions, to develop bivariate goodness-of-fit tests as well as tests for the symmetry of a bivariate distribution (i.e., for the exchangeability of the corresponding random variables), and we illustrate their application in practice by a couple of real-world data examples. Finally, we conclude in Section~\ref{Conclusions} and outline opportunities for future research.

\section{Stein Characterization of Bivariate Poisson Distribution}
\label{Stein Characterization of Bivariate Poisson Distribution}
The bivariate Poisson distribution is a probability distribution that models the joint occurrence of two non-negative integer-valued random variables, $(X_1,X_2)$ with range $\N_0^2$, and is often used to describe paired count data with possible cross-correlation. It extends the univariate Poisson distribution by incorporating a covariance structure, making it suitable for applications where two events are related, such as the frequency of defects in two production lines, or the number of goals scored by two teams in a sports match \citep{karlis03}.

\smallskip
Given three independent random variables $Z_0, Z_1,Z_2$ with $Z_i\sim \Poi(\lambda_i)$ for $i=0,1,2$, the bivariate count random vector
\begin{equation}
  \fX = (X_1,X_2)^\top=(Z_0+Z_1,\ Z_0+Z_2)^\top
\end{equation}
is said to be bivariately Poisson (BPoi) distributed according to $\BPoi(\lambda_0; \lambda_1,\lambda_2)$.
The marginals satisfy $X_i\sim\Poi(\lambda_0+\lambda_i)$ for $i=1,2$, and the cross-covariance equals $\operatorname{Cov}(X_1,X_2)=\lambda_0$. The BPoi-distribution is suitable for equidispersed count, i.e., where $\E[X_i]=\V[X_i]$ is satisfied for $i=1,2$.
Characterized by parameters for each variable's mean and their covariance, this distribution provides a flexible framework for analyzing dependent count data in fields like statistics, epidemiology, and engineering, see \citet{kocherlakota14} for further details and references. 

\begin{theorem}[Stein characterization of bivariate Poisson distribution]\label{main-thm}
  Suppose $(X_1,X_2)\sim \BPoi(\lambda_0; \lambda_1,\lambda_2)$, then for any function
  $f: \mathbb{N}_0\times\mathbb{N}_0\to\R$ such that the following expectations exist, the following two identities hold:
  \begin{equation}\label{thm-eq-1}
    \E[X_1\, f(X_1,X_2)]=\lambda_1\, \E[f(X_1+1,X_2)]+\lambda_0\, \E[f(X_1+1,X_2+1)],
  \end{equation}
  \begin{equation}\label{thm-eq-2}
    \E[X_2\, f(X_1,X_2)]=\lambda_2\, \E[f(X_1,X_2+1)]+\lambda_0\, \E[f(X_1+1,X_2+1)].
  \end{equation}
 Conversely, if the identities \eqref{thm-eq-1}--\eqref{thm-eq-2} hold for all $f: \mathbb{N}_0\times\mathbb{N}_0\to\R$, and assuming that $(X_1,X_2)$ is a non-negative integer-valued random vector, then $(X_1,X_2)\sim \BPoi(\lambda_0; \lambda_1,\lambda_2)$.
\end{theorem}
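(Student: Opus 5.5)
For the direct part we use the trivariate reduction $X_1=Z_0+Z_1$, $X_2=Z_0+Z_2$ and apply the univariate Stein--Chen identity \eqref{SteinChen} to each summand after conditioning on the other two. Since $X_1 f(X_1,X_2)=Z_1 f(Z_0+Z_1,Z_0+Z_2)+Z_0 f(Z_0+Z_1,Z_0+Z_2)$, we treat the two terms separately.

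\emph{First term.} Conditioning on $(Z_0,Z_2)$ and using \eqref{SteinChen} for $Z_1\sim\Poi(\lambda_1)$ with the function $z\mapsto f(Z_0+z,Z_0+Z_2)$ gives $\E[Z_1 f(\cdot)]=\lambda_1\E[f(X_1+1,X_2)]$.

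\emph{Second term.} Conditioning on $(Z_1,Z_2)$ and applying \eqref{SteinChen} to $Z_0$ shifts both coordinates. This gives $\lambda_0\E[f(X_1+1,X_2+1)]$.

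Together these yield \eqref{thm-eq-1}, and \eqref{thm-eq-2} follows symmetrically. Existence of the expectations justifies the conditioning through Fubini's theorem. Alternatively, one can verify the identities directly on the probability mass function $p(x_1,x_2)=e^{-\lambda_0-\lambda_1-\lambda_2}\sum_{k}\frac{\lambda_0^k\lambda_1^{x_1-k}\lambda_2^{x_2-k}}{k!(x_1-k)!(x_2-k)!}$ by checking the recursion
\[
x_1 p(x_1,x_2)=\lambda_1 p(x_1-1,x_2)+\lambda_0 p(x_1-1,x_2-1).
\]
This recursion follows termwise by writing $x_1=(x_1-k)+k$.

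For the converse, let $p(i,j)=\P(X_1=i,X_2=j)$ and plug in indicators $f=\indfct_{\{(i,j)\}}$. Then \eqref{thm-eq-1} gives
\[
i\,p(i,j)=\lambda_1 p(i-1,j)+\lambda_0 p(i-1,j-1),
\]
and \eqref{thm-eq-2} gives
\[
j\,p(i,j)=\lambda_2 p(i,j-1)+\lambda_0 p(i-1,j-1),
\]
with the convention $p=0$ at negative indices. These recursions determine every $p(i,j)$ from $p(0,0)$. If $i\ge1$ use the first, which expresses $p(i,j)$ through values with smaller first index; if $i=0$, $j\ge1$ use the second, which gives $p(0,j)=\lambda_2 p(0,j-1)/j$. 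Induction on $i+j$ therefore shows $p(i,j)=p(0,0)\,q(i,j)$, where $q$ is the BPoi mass function divided by its value at $(0,0)$, since $q$ satisfies the same recursions. Normalization $\sum p=1$ then forces $p(0,0)=e^{-\lambda_0-\lambda_1-\lambda_2}$, because $\sum q=e^{\lambda_0+\lambda_1+\lambda_2}<\infty$.

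The main subtlety I expect is in the converse: the two recursions overdetermine $p$, and we must argue that the solution is nevertheless unique and consistent. Consistency is automatic because $p$ is assumed to exist and the BPoi pmf satisfies both recursions. Uniqueness only needs the ordered induction described above. A smaller technical point is the integrability requirement for indicator test functions, which is trivial since they are bounded and finitely supported.
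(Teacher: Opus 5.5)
Your proof is correct, but it takes a different route from the paper in both directions.

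\textbf{Direct part.} The paper does not use the trivariate reduction. It quotes the pmf recursion \eqref{BPoi_prop} from the literature, multiplies by $f$ and re-indexes the sums. This is essentially your alternative argument, except that you verify the recursion termwise instead of citing it. Your conditioning argument is more self-contained, and it explains why shifting $Z_1$ moves only the first coordinate while shifting $Z_0$ moves both. One small point: \eqref{SteinChen} is stated for bounded $f$, but $z\mapsto f(Z_0+z,Z_0+Z_2)$ may be unbounded. You should note that the univariate identity holds whenever $\E|Z\,g(Z)|<\infty$, since both sides are the same series termwise. Integrability of $Z_0 f$ and $Z_1 f$ then follows from $0\le Z_0,Z_1\le X_1$.

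\textbf{Converse.} The paper works with the pgf. It takes $f(x,y)=g(y)$ in \eqref{thm-eq-2} and invokes the univariate Stein--Chen characterization to get $X_2\sim\Poi(\lambda_0+\lambda_2)$. It then takes $f(x,y)=s^xt^y$ in \eqref{thm-eq-1} to get $\partial_s G=(\lambda_1+\lambda_0 t)\,G$, solves this as $G=\exp\{\lambda_1 s+\lambda_0 st+\theta(t)\}$, and fixes $\theta$ via $G(1,t)$.

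Your indicator-plus-induction argument is the discrete counterpart of this. Here $e^{\theta(t)}=G(0,t)$ is the generating function of the column $p(0,\cdot)$. The paper pins that column down through the $X_2$-marginal; you pin it down through \eqref{thm-eq-2} at $i=0$ plus normalization.

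\textbf{Comparison.} Your route is more elementary. It needs no differentiation under the expectation, no evaluation at $s=1$ of a pgf only considered for $|s|<1$, and no appeal to the univariate characterization. It also shows that indicator test functions alone suffice. The paper's pgf route, in turn, is the template it reuses for the bivariate binomial and negative-binomial cases, where the closed-form pgf makes the resulting PDE immediately solvable.
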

Note that the identities \eqref{thm-eq-1}--\eqref{thm-eq-2} include identities (15)--(16) in \citet{weiss2022} as a special case.

\begin{proof}
  We first suppose $(X_1,X_2)\sim \BPoi(\lambda_0; \lambda_1,\lambda_2)$ and write $p(x,y):=\P(X_1=x,X_2=y)$ as the probability mass function (pmf) of $(X_1,X_2)$. Recall that (see Section~4.1.1 in \citet{kocherlakota14})
\ba
\label{BPoi_prop}
\begin{array}{l}
  x\,p(x,y)=\lambda_1\, p(x-1,y)+\lambda_0\, p(x-1,y-1),
\\[1ex]
  y\,p(x,y)=\lambda_2\, p(x,y-1)+\lambda_0\, p(x-1,y-1),
	\end{array}
\ea
 with the convention $p(x,y)=0$ if $x<0$ or $y<0$. Now, we have
  \begin{align*}
    \E[X_1\,f(X_1,X_2)]&=\sum_{x,y} x\,f(x,y)\,p(x,y)\\
    &=\sum_{x,y} f(x,y)\,[\lambda_1\, p(x-1,y)+\lambda_0\, p(x-1,y-1)]\\
    &=\lambda_1 \sum_{x,y} f(x,y)\,p(x-1,y)+\lambda_0 \sum_{x,y} f(x,y)\,p(x-1,y-1)\\
    &=\lambda_1\, \E[f(X_1+1,X_2)]+\lambda_0\,\E[f(X_1+1,X_2+1)],
  \end{align*}
  which proves \eqref{thm-eq-1}. \eqref{thm-eq-2} can be proved similarly or by a symmetric argument, so we omit the details here.

\smallskip
  Conversely, let \eqref{thm-eq-1}--\eqref{thm-eq-2} hold.
	For an arbitrary function $g: \mathbb{N}_0\to\R$, define $f(x,y)=g(y)$ for all $x\in\N_0$. Then, (\ref{thm-eq-2}) yields
  $$
  \E[X_2\,g(X_2)]=(\lambda_0+\lambda_2)\,\E[g(X_2+1)].
  $$
  Thus, from the Stein--Chen characterization \eqref{SteinChen} of the Poisson distribution, we know $X_2\sim \Poi(\lambda_0+\lambda_2)$; analogously, one proves $X_1\sim \Poi(\lambda_0+\lambda_1)$.
	
	Next, we take $f(x,y)=s^xt^y$, $|s|,|t|<1$, and denote the joint probability generating function (pgf) of $(X_1,X_2)$ as $G(s,t)=\E[s^{X_1}t^{X_2}], |s|,|t|<1$. Then, from (\ref{thm-eq-1}), we obtain
  \begin{align*}
    s\,\frac{\partial}{\partial s}G(s,t)=\E[X_1\,s^{X_1}t^{X_2}]=\lambda_1\, s\, G(s,t)+\lambda_0\, st\, G(s,t),
  \end{align*}
  that is
  \begin{align*}
    \frac{\partial}{\partial s}G(s,t)=\lambda_1\, G(s,t)+\lambda_0\, t\, G(s,t)= (\lambda_1+\lambda_0\, t)\, G(s,t)
  \end{align*}
  holds for all $|s|,|t|<1$. The above partial differential equation can be solved by
 \begin{align*}
    \frac{\partial}{\partial s}\ln G(s,t)=\lambda_1 +\lambda_0 t,
  \end{align*}
  which further implies $G(s,t)=\exp\{\lambda_1s +\lambda_0 st+\theta(t)\}$, where $\theta(t)$ is a function that depends only on $t$. Observe that on the one hand, $G(1,t)=\E[t^{X_2}]=\exp\big((\lambda_0+\lambda_2)(t-1)\big)$ because of $X_2\sim \Poi(\lambda_0+\lambda_2)$. On the other hand, $G(1,t)=\exp\{\lambda_1 +\lambda_0 t+\theta(t)\}$. Thus, we get $\theta(t)=(t-1)\lambda_2-\lambda_0-\lambda_1$. So finally,
  $$
  G(s,t)=\exp\Big\{\lambda_1(s-1) +\lambda_2(t-1)+\lambda_0 (st-1)\Big\},\quad |s|,|t|<1,
  $$
  which uniquely determines the joint pmf of $(X_1,X_2)\sim \BPoi(\lambda_0; \lambda_1,\lambda_2)$. This completes the proof of Theorem~\ref{main-thm}.
\end{proof}

\begin{rmk}
  In the boundary case $\lambda_0+\lambda_2=0$, i.e., if $X_2\equiv 0$, then \eqref{thm-eq-1} reduces to the well-known Stein characterization \eqref{SteinChen} of the univariate Poisson distribution.
\end{rmk}

\section{Stein Characterization of Bivariate Binomial Distribution}
\label{Stein Characterization of Bivariate Binomial Distribution}
Let $\fZ_k:=(Z_{k1},Z_{k2})^\top$, $k\geq 1$, be independent and identically distributed (i.i.d.) bivariate Bernoulli random vectors, with the common joint pmf of $\fZ_k$ being denoted by
\[
p_{ij} := \P(\fZ_k = (i,j)^\top), \quad i,j \in \{0,1\}.
\]
The parameters $a_1, a_2 \in (0,1)$ and $0<a <\min\{a_1, a_2\}$ are chosen such that
\begin{equation}\label{eq-joint-pmf}
p_{11}=a, \quad p_{10}+p_{11}=a_1, \quad p_{01}+p_{11}=a_2
\end{equation}
defines a valid pmf. In particular, $1-a_1=P(Z_{k1}=0)\geq p_{01}=a_2-a$ has to hold. Equivalently, one may use the correlation parameter
\[
\phi_a := \operatorname{Corr}(Z_{k1}, Z_{k2}) =
\frac{a - a_1 a_2}{\sqrt{a_1 a_2 (1-a_1)(1-a_2)}}.
\]
Define the random vector
\[
\fX = (X_1, X_2)^\top := \sum_{k=1}^N \fZ_k
\quad\text{with } N\in\N.
\]
Then, $\fX$ is said to follow the \emph{bivariate binomial (BVB) distribution of type~I} with parameters $N, a_1, a_2, \phi_a$, denoted by
\[
\fX \sim \BVB(N; a_1, a_2, \phi_a).
\]
Its range is $\{0,1,\dots,N\}^2$, the marginal distributions are $X_i \sim \Bin(N, a_i)$ for $i=1,2$, and $\operatorname{Corr}(X_1,X_2) = \phi_a$.
According to \citet{kocherlakota14}, the joint pgf of $\fX$ is
\begin{equation}\label{def-mgf}
G(z_1, z_2) := \E\bigl[z_1^{X_1} z_2^{X_2}\bigr]
    = \bigl( p_{00} + p_{10} z_1 + p_{01} z_2 + p_{11} z_1 z_2 \bigr)^N,
\end{equation}
where $p_{00}, p_{10}, p_{01}, p_{11}$ are determined by \eqref{eq-joint-pmf}.

\begin{theorem}[Stein characterization of bivariate binomial distribution]\label{thm:stein-bvb}
Let $N\in\mathbb{N}$ be fixed and \( (X_1, X_2) \) be a non-negative random vector with range $\{0,1,\ldots,N\}^2$. Then, \( (X_1, X_2) \sim \BVB(N; a_1, a_2, \phi_a) \) if and only if for all functions \( f: \mathbb{N}_0\times\mathbb{N}_0 \to \R \), the following two identities hold:
\begin{align}\label{eq:stein1}
p_{00}\,& \E[X_1 f(X_1, X_2)] + p_{01}\, \E[X_1 f(X_1, X_2+1)] \\
&= p_{10}\, \E[(N-X_1)\, f(X_1+1, X_2)] + p_{11}\, \E[(N-X_1)\, f(X_1+1, X_2+1)], \notag
\end{align}
and
\begin{align}\label{eq:stein2}
p_{00}\,& \E[X_2 f(X_1, X_2)] + p_{10}\, \E[X_2 f(X_1+1, X_2)] \\
&= p_{01}\, \E[(N-X_2)\, f(X_1, X_2+1)] + p_{11}\, \E[(N-X_2)\, f(X_1+1, X_2+1)],\notag
 \end{align}
where the parameters $a_1$, $a_2$, and $\phi_a$ are uniquely determined by $p_{ij}$ with $i,j\in\{0,1\}$ via \eqref{eq-joint-pmf}.
\end{theorem}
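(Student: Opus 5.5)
The plan is to reduce both directions to statements about the joint pgf $G(z_1,z_2)=\E[z_1^{X_1}z_2^{X_2}]$, in the same spirit as the proof of Theorem~\ref{main-thm}. Because the range is the finite set $\{0,\dots,N\}^2$, $G$ is a polynomial and all expectations exist for every $f$. Write $Q(z_1,z_2):=p_{00}+p_{10}z_1+p_{01}z_2+p_{11}z_1z_2$, so that \eqref{def-mgf} reads $G=Q^N$ under the BVB hypothesis.

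\textbf{Step 1: translate \eqref{eq:stein1} for monomials into a PDE.} Take $f(x,y)=z_1^xz_2^y$. The left-hand side of \eqref{eq:stein1} becomes $(p_{00}+p_{01}z_2)\,z_1\partial_1G$. For the right-hand side, $\E[(N-X_1)z_1^{X_1+1}z_2^{X_2}]=z_1(NG-z_1\partial_1 G)$, so it becomes $z_1(p_{10}+p_{11}z_2)(NG-z_1\partial_1G)$. Collecting terms, \eqref{eq:stein1} for all monomials is equivalent to
\begin{equation*}
Q(z_1,z_2)\,\partial_1 G(z_1,z_2)=N\,(p_{10}+p_{11}z_2)\,G(z_1,z_2).
\end{equation*}
In the same way, \eqref{eq:stein2} for all monomials is equivalent to $Q\,\partial_2G=N(p_{01}+p_{11}z_1)G$. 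Both are polynomial identities in $(z_1,z_2)$.

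\textbf{Step 2: monomials versus general $f$.} Both sides of \eqref{eq:stein1} and \eqref{eq:stein2} are linear in $f$ and depend only on the values of $f$ on the finite grid $\{0,\dots,N+1\}^2$. Values outside $\{0,\dots,N\}^2$ in the shifted arguments are harmless, since they only appear multiplied by $N-X_i=0$ or with zero probability. Hence it suffices to check the identities for indicator functions $f=\indfct_{\{(u,v)\}}$. Expanding each side of the monomial identity as a polynomial in $(z_1,z_2)$ shows that the coefficient of $z_1^uz_2^v$ is exactly the identity for that indicator. Therefore the identities hold for all $f$ if and only if the two polynomial PDEs of Step~1 hold identically.

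\textbf{Step 3: both directions.} For the direct part, substitute $G=Q^N$: then $\partial_1G=NQ^{N-1}(p_{10}+p_{11}z_2)$, and the first PDE is immediate; the second follows symmetrically. For the converse, I work near $(1,1)$, where $Q(1,1)=\sum p_{ij}=1\neq0$, so $Q\neq 0$ on a neighbourhood. There the two PDEs say $\partial_1\ln G=\partial_1\ln Q^N$ and $\partial_2\ln G=\partial_2\ln Q^N$, hence $G=C\,Q^N$ for a constant $C$ (note $G(1,1)=1\neq0$, so $\ln G$ is defined locally). The normalisations $G(1,1)=1=Q(1,1)$ give $C=1$. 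Since two polynomials agreeing on an open set coincide, $G\equiv Q^N$, and by uniqueness of the pgf $\fX\sim\BVB(N;a_1,a_2,\phi_a)$, with the parameters read off via \eqref{eq-joint-pmf}.

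The main obstacle I expect is the bookkeeping in Step~2. One must justify that identities for monomials, or equivalently for indicators, capture the identities for arbitrary $f$, and that evaluating $f$ at $N+1$ causes no trouble. A secondary point is the local nonvanishing of $Q$ and $G$ needed to take logarithms in Step~3, which is handled by working near $(1,1)$ and then extending by polynomial identity.
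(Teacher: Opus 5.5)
Your proposal is correct and follows essentially the paper's route: both directions go through the pgf equations $Q\,\partial_1G=N(p_{10}+p_{11}z_2)G$ and $Q\,\partial_2G=N(p_{01}+p_{11}z_1)G$. Your indicator/coefficient argument plays the role of the paper's pmf recurrence multiplied by $f(x_1+1,x_2)$ and summed, and your local logarithm near $(1,1)$ followed by polynomial continuation is a more careful version of the paper's integration of $\partial\log G$, which divides by $P$ and $G$ on $|s|,|t|\le 1$ where these can vanish. One small correction: the values $f(\cdot,N+1)$ in \eqref{eq:stein1} (and $f(N+1,\cdot)$ in \eqref{eq:stein2}) are not harmless, since for instance $p_{01}\,\E[X_1 f(X_1,X_2+1)]$ picks them up on the event $\{X_2=N\}$ without a vanishing factor; this does not affect your argument, because the reduction to indicators over the full grid $\{0,\dots,N+1\}^2$ already accounts for them.
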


\begin{proof}
We first prove the necessity part. Let $p(x_1, x_2)$ denote the joint pmf of $\BVB(N; a_1, a_2, \phi_a)$, and let $G(z_1, z_2)$ be its joint pgf defined in \eqref{def-mgf}. Differentiating $G$ with respect to $z_1$ and then multiplying both sides by $P(z_1, z_2)$ yields
\begin{equation}\label{eq-1}
P(z_1, z_2) \frac{\partial G}{\partial z_1} = N (p_{10} + p_{11} z_2)\, G(z_1, z_2),
\end{equation}
where $P(z_1, z_2) = p_{00} + p_{10} z_1 + p_{01} z_2 + p_{11} z_1 z_2$ is the pgf of the bivariate Bernoulli distribution \eqref{eq-joint-pmf}. We rewrite the left-hand side of \eqref{eq-1} as
\begin{align}\label{eq-2}
P(z_1, z_2) \frac{\partial G}{\partial z_1}
&= (p_{00} + p_{10} z_1 + p_{01} z_2 + p_{11} z_1 z_2)
   \sum_{x_1, x_2} x_1 p(x_1, x_2)\, z_1^{x_1-1} z_2^{x_2} \notag\\
&= \sum_{x_1, x_2} x_1 p(x_1, x_2) \bigl[ p_{00}\, z_1^{x_1-1} z_2^{x_2}
   + p_{10}\, z_1^{x_1} z_2^{x_2}
   + p_{01}\, z_1^{x_1-1} z_2^{x_2+1}
   + p_{11}\, z_1^{x_1} z_2^{x_2+1} \bigr].
\end{align}
Then, we reindex each term as follows:
\begin{itemize}
    \item $p_{00} \sum_{x_1, x_2} x_1 p(x_1, x_2) z_1^{x_1-1} z_2^{x_2}
          = p_{00} \sum_{x_1, x_2} (x_1+1) p(x_1+1, x_2) z_1^{x_1} z_2^{x_2}$,
    \item $p_{10} \sum_{x_1, x_2} x_1 p(x_1, x_2) z_1^{x_1} z_2^{x_2}
          = p_{10} \sum_{x_1, x_2} x_1 p(x_1, x_2) z_1^{x_1} z_2^{x_2}$,
    \item $p_{01} \sum_{x_1, x_2} x_1 p(x_1, x_2) z_1^{x_1-1} z_2^{x_2+1}
          = p_{01} \sum_{x_1, x_2} (x_1+1) p(x_1+1, x_2-1) z_1^{x_1} z_2^{x_2}$,
    \item $p_{11} \sum_{x_1, x_2} x_1 p(x_1, x_2) z_1^{x_1} z_2^{x_2+1}
          = p_{11} \sum_{x_1, x_2} x_1 p(x_1, x_2-1) z_1^{x_1} z_2^{x_2}$.
\end{itemize}
The right-hand side of \eqref{eq-1} is given by
\begin{align}\label{eq-3}
N(p_{10} + p_{11} z_2) G(z_1, z_2)
&= N p_{10} \sum_{x_1, x_2} p(x_1, x_2) z_1^{x_1} z_2^{x_2}
+ N p_{11} \sum_{x_1, x_2} p(x_1, x_2) z_1^{x_1} z_2^{x_2+1}\notag \\
&= \sum_{x_1, x_2} \bigl[ N p_{10} p(x_1, x_2)
+ N p_{11} p(x_1, x_2-1) \bigr] z_1^{x_1} z_2^{x_2}.
\end{align}
Equating coefficients of $z_1^{x_1} z_2^{x_2}$ in \eqref{eq-2} and \eqref{eq-3}, we obtain the recurrence relation
\begin{align*}
 p_{00} (x_1+1) p(x_1+1, x_2) + p_{10} x_1 p(x_1, x_2)
& + p_{01} (x_1+1) p(x_1+1, x_2-1) + p_{11} x_1 p(x_1, x_2-1) \\
&= N p_{10} p(x_1, x_2) + N p_{11} p(x_1, x_2-1).
\end{align*}
Multiplying both sides of the above relation by an arbitrary test function $f(x_1+1, x_2)$, summing over all $x_1, x_2$ and re-indexing each term, we get that the left-hand side is equal to the sum of the following four terms:
\begin{itemize}
    \item Term 1: Let $y_1 = x_1+1$, then
    \[
    p_{00} \sum_{y_1, x_2} f(y_1, x_2) y_1 p(y_1, x_2) = p_{00} \E[X_1 f(X_1, X_2)].
    \]
    \item Term 2: We have
    \[
    p_{10} \sum_{x_1, x_2} f(x_1+1, x_2) x_1 p(x_1, x_2) = p_{10} \E[X_1 f(X_1+1, X_2)].
    \]
    \item Term 3: Let $y_1 = x_1+1$, $y_2 = x_2-1$, then
    \[
    p_{01} \sum_{y_1, y_2} f(y_1, y_2+1) y_1 p(y_1, y_2) = p_{01} \E[X_1 f(X_1, X_2+1)].
    \]
    \item Term 4: Let $y_2 = x_2-1$, then
    \[
    p_{11} \sum_{x_1, y_2} f(x_1+1, y_2+1) x_1 p(x_1, y_2) = p_{11} \E[X_1 f(X_1+1, X_2+1)].
    \]
\end{itemize}
The corresponding right-hand side equals
\begin{align*}
\sum_{x_1, x_2}f(x_1+1, x_2) &\bigl[ N p_{10} p(x_1, x_2) + N p_{11} p(x_1, x_2-1) \bigr] \\
&= N p_{10} \sum_{x_1, x_2} f(x_1+1, x_2) p(x_1, x_2)
+ N p_{11} \sum_{x_1, x_2} f(x_1+1, x_2) p(x_1, x_2-1) \\
&= N p_{10} \E[f(X_1+1, X_2)] + N p_{11} \E[f(X_1+1, X_2+1)].
\end{align*}
Thus, we finally get
\begin{align}
p_{00} \E[X_1 f(X_1, X_2)]& + p_{10} \E[X_1 f(X_1+1, X_2)]\notag \\
&+ p_{01} \E[X_1 f(X_1, X_2+1)] + p_{11} \E[X_1 f(X_1+1, X_2+1)]\notag \\
&= N p_{10} \E[f(X_1+1, X_2)] + N p_{11} \E[f(X_1+1, X_2+1)], \notag
\end{align}
so \eqref{eq:stein1} follows. \eqref{eq:stein2} is proved similarly or by a symmetric argument.

Next, we prove the sufficiency part. To determine the joint distribution, consider the joint pgf
\( G(s,t) = \E[ s^{X_1} t^{X_2} ] \) with \( |s|,|t| \le 1 \).
First, we choose $f(x,y) = s^{x} t^{y}$ in (\ref{eq:stein1}) and compute each term:
\begin{align*}
\E[X_1 f(X_1, X_2)] &= \E[X_1 s^{X_1} t^{X_2}] = s\frac{\partial G}{\partial s}(s,t), \\
\E[X_1 f(X_1+1, X_2)] &= \E[X_1 s^{X_1+1} t^{X_2}] = s^2\frac{\partial G}{\partial s}(s,t), \\
\E[X_1 f(X_1, X_2+1)] &= \E[X_1 s^{X_1} t^{X_2+1}] = st\frac{\partial G}{\partial s}(s,t), \\
\E[X_1 f(X_1+1, X_2+1)] &= \E[X_1 s^{X_1+1} t^{X_2+1}] = s^2t\frac{\partial G}{\partial s}(s,t), \\
\E[f(X_1+1, X_2)] &= \E[s^{X_1+1} t^{X_2}] = s\,G(s,t), \\
\E[f(X_1+1, X_2+1)] &= \E[s^{X_1+1} t^{X_2+1}] = st\,G(s,t).
\end{align*}
Substituting into (\ref{eq:stein1}):
\[
p_{00} s\frac{\partial G}{\partial s} + p_{10} s^2\frac{\partial G}{\partial s} + p_{01} st\frac{\partial G}{\partial s} + p_{11} s^2t\frac{\partial G}{\partial s}
= N p_{10} s\, G + N p_{11} st\, G.
\]
For $s \neq 0$, we cancel out $s$:
\begin{equation}\label{A}
\left[p_{00} + p_{10}s + p_{01}t + p_{11}st\right]\frac{\partial G}{\partial s} = N(p_{10} + p_{11}t)G.
\end{equation}
Similarly, from \eqref{eq:stein2}, we obtain
\begin{equation}\label{B}
\left[p_{00} + p_{10}s + p_{01}t + p_{11}st\right]\frac{\partial G}{\partial t} = N(p_{01} + p_{11}s)G.
\end{equation}
Denoting the Bernoulli pgf by $P(s,t) = p_{00} + p_{10}s + p_{01}t + p_{11}st$ again, the equations \eqref{A} and \eqref{B} imply
\begin{align}\label{eq-C1}
\frac{\partial \log G}{\partial s} = \frac{N(p_{10} + p_{11}t)}{P(s,t)}, \quad
\frac{\partial \log G}{\partial t} = \frac{N(p_{01} + p_{11}s)}{P(s,t)}.
\end{align}
Integrating the first equation with respect to $s$ (treating $t$ as a constant), we obtain
\[
\log G(s,t) = N\log P(s,t) + H(t),
\]
where $H(t)$ is a function of $t$ only. So differentiating this expression with respect to $t$, and
comparing it to the second expression in \eqref{eq-C1}, we get $H'(t) = 0$ for all $t$. Thus, $H(t)$ is a constant, say $H(t) = C$. Since bivariate pgfs satisfy $G(1,1) = 1$ and $P(1,1) = 1$, we must have $H(1) = 0$. Therefore, $H(t)\equiv 0$ and, as a consequence,
\[
G(s,t) = [p_{00} + p_{10}s + p_{01}t + p_{11}st]^N
\]
holds for $|s|\leq 1,|t|\leq 1$. This is exactly the pgf of $\BVB(N; a_1, a_2, \phi_a)$, where $a_1 = p_{10} + p_{11}$, $a_2 = p_{01} + p_{11}$, and
$$
\phi_a=
\frac{p_{11} - a_1 a_2}{\sqrt{a_1 a_2 (1-a_1)(1-a_2)}}.
$$
Since the pgf uniquely determines the distribution, we conclude $(X_1, X_2) \sim \BVB(N; a_1, a_2, \phi_a)$, and the proof of Theorem~\ref{thm:stein-bvb} is complete.
\end{proof}

\begin{rmk}
In the boundary case $a_2=p_{11}+p_{01}=0$, i.e., if $X_2\equiv 0$, the Stein characterization in \eqref{eq:stein1} reduces to
$$
(1-a_1)\E[Xf_0(X)]=a_1\E[(N-X)f_0(X+1)]
$$
for any bounded function~$f_0$ defined on $\N_0$. This is exactly the Stein characterization of $X\sim \Bin(N,a_1)$, recall \eqref{SteinBin}.
\end{rmk}

\begin{rmk}\label{rmk-2Pois-1Pois}
  Take \( f(x,y) = g(y) \) independent of \( x \) in (\ref{eq:stein2}). Then,
\begin{align*}
(p_{00}+p_{10})\,\E[ X_2\, g(X_2) ] &=
(p_{01}+p_{11})\,\E[(N-X_2)\,g(X_2+1)],
\end{align*}
that is,
$$
(1-a_2)\,\E[ X_2\, g(X_2) ]=a_2\,\E[(N-X_2)\, g(X_2+1) ]
$$
where we used $a_2 = p_{01} + p_{11}$. This is exactly the classical Stein identity for $X_2 \sim \Bin(N, a_2)$, recall \eqref{SteinBin}, so $X_2 \sim \Bin(N, a_2)$. Similarly, taking $f(x,y) = h(x)$ in (\ref{eq:stein1}) gives $X_1 \sim \Bin(N, a_1)$.
\end{rmk}

\begin{rmk}
Suppose $Np_{11}\to \lambda_0$, $Np_{10}\to \lambda_1$ and $Np_{01}\to \lambda_2$ for $N\to\infty$, which also implies that $p_{10},p_{01},p_{11}\to 0$ and $p_{00}\to 1$. Then, by the Poisson approximation, we have $(X_1,X_2)\to_d \BPoi(\lambda_0; \lambda_1,\lambda_2)$, and \eqref{eq:stein1}--\eqref{eq:stein2} reduce
to \eqref{thm-eq-1}--\eqref{thm-eq-2}. So the result in Theorem~\ref{thm:stein-bvb} is compatible with Theorem~\ref{main-thm}.
\end{rmk}

\section{Stein Characterization of Bivariate Negative-binomial Distribution}
\label{Stein Characterization of Bivariate Negative-binomial Distribution}
With \(\nu \in (0,\infty)\), \(\pi_{1},\pi_{2}\in (0,1)\), and \(\pi_{0}\in (- \pi_{1}\pi_{2},1)\) such that \(\pi_{\bullet}:= \sum_{i = 0}^{2}\pi_{i}< 1\) holds, the bivariate negative-binomial (BNB) distribution \(\BNB(\nu;\pi_{1},\pi_{2},\pi_{0})\) is defined by the bivariate pgf
\begin{equation}\label{BNB-pgf}
G(z_{1},z_{2}) = \left(\frac{1}{1 - \pi_{\bullet}} -\frac{\pi_{1}}{1 - \pi_{\bullet}} z_{1} - \frac{\pi_{2}}{1 - \pi_{\bullet}} z_{2} - \frac{\pi_{0}}{1 - \pi_{\bullet}} z_{1}z_{2}\right)^{-\nu},
\end{equation}
see Section~5 in \citet{kocherlakota14} or Section~2.2 in \citet{aleksandrov24} for details. The special case $\pi_0=0$ is also known as the negative-multinomial distribution.
We have a Poisson limit for \(\nu\rightarrow \infty\), namely \(\BPoi(\lambda_{0};\lambda_{1},\lambda_{2})\) if \(\nu\pi_i/(1 - \pi_{\bullet})\rightarrow \lambda_{i}\) for \(\nu\rightarrow \infty\). The marginals are univariate NB-distributions, namely \(X_{1}\sim \mathrm{NB}(\nu, \frac{1 - \pi_{\bullet}}{1 - \pi_{2}})\) and \(X_{2}\sim \mathrm{NB}(\nu, \frac{1 - \pi_{\bullet}}{1 - \pi_{1}})\), respectively. The corresponding marginal means are
$\mu_1 = \nu\, \frac{\pi_{0}+\pi_1}{1 - \pi_{\bullet}}$
and
$\mu_2 = \nu\, \frac{\pi_{0}+\pi_2}{1 - \pi_{\bullet}}$, respectively, and the cross-covariance equals $\operatorname{Cov}(X_1, X_2)= \nu\, \frac{\pi_0+\pi_1 \pi_2}{(1-\pi_\bullet)^2}$. For simulating the \(\BNB(\nu;\pi_{1},\pi_{2},\pi_{0})\)-distribution, it is useful to know that the distribution of~$X_1|x_2$ is a convolution of the distributions $\Bin(x_2,\frac{\pi_{0}}{\pi_2+\pi_{0}})$ and $\NB(\nu+x_2, 1-\pi_1)$ if $\pi_0>0$, and that $X_1|x_2 \sim \NB(\nu+x_2, 1-\pi_1)$ if $\pi_0=0$.

\begin{theorem}[Stein characterization of bivariate NB distribution]\label{thm:stein-bnb} Let \(\nu\in (0,\infty)\) be fixed and \((X_{1},X_{2})\) be a non-negative random vector valued in \(\N_0^{2}\) . Then \((X_{1},X_{2})\sim \BNB(\nu;\pi_{1},\pi_{2},\pi_{0})\) if and only if for all functions \(f:\N_0^{2}\to \mathbb{R}\), provided that the following moments exist, the following two identities hold:
\begin{align}
\mathbb{E}[X_1\, f(X_1, X_2)] &- \pi_2\, \mathbb{E}[X_1\, f(X_1, X_2+1)] \label{eq:BNB-Stein1}\\
 &= \pi_1\, \mathbb{E}[(\nu+X_1)\,f(X_1+1, X_2)] + \pi_0\, \mathbb{E}[(\nu+X_1)\,f(X_1+1, X_2+1)] \notag
\end{align}
and
\begin{align}
\mathbb{E}[X\,_2 f(X_1, X_2)] &- \pi_1\, \mathbb{E}[X_2\, f(X_1+1, X_2)] \label{eq:BNB-Stein2}\\
 &= \pi_2\, \mathbb{E}[(\nu+X_2)\,f(X_1, X_2+1)] + \pi_0\, \mathbb{E}[(\nu+X_2)\,f(X_1+1, X_2+1)].\notag
\end{align}
\end{theorem}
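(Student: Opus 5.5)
Both directions will run through the joint pgf, as in the proof of Theorem~\ref{thm:stein-bvb}. Write $c=1-\pi_\bullet$ and $Q(s,t)=1-\pi_1 s-\pi_2 t-\pi_0 st$, so that \eqref{BNB-pgf} reads $G(s,t)=c^{\nu}Q(s,t)^{-\nu}$. This is analytic for $|s|,|t|\le 1$, because $|\pi_1 s+\pi_2 t+\pi_0 st|\le \pi_1+\pi_2+|\pi_0|$. The first step is to check that $\pi_1+\pi_2+|\pi_0|<1$. When $\pi_0\ge 0$ this is just $\pi_\bullet<1$. When $\pi_0<0$ it follows from $-\pi_0<\pi_1\pi_2$ and $\pi_1+\pi_2+\pi_1\pi_2\le$ something less than $1$ on the admissible set; if this bound turns out not to hold, I will instead work on a small polydisc around the origin, which is enough for uniqueness of the pgf.

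\emph{Necessity.} Differentiating $G$ gives
\[
Q(s,t)\,\partial_s G=\nu(\pi_1+\pi_0 t)\,G, \qquad Q(s,t)\,\partial_t G=\nu(\pi_2+\pi_0 s)\,G .
\]
I will expand $Q\,\partial_sG=\sum x\,p(x,y)\,s^{x-1}t^y\,(1-\pi_1 s-\pi_2 t-\pi_0 st)$ and compare coefficients of $s^{x}t^{y}$. This yields the recurrence
\[
(x+1)p(x+1,y)-\pi_2(x+1)p(x+1,y-1)=\pi_1(\nu+x)p(x,y)+\pi_0(\nu+x)p(x,y-1),
\]
where the terms $\pi_1 x\,p(x,y)$ and $\pi_0 x\,p(x,y-1)$ have been moved to the right and combined with $\nu\pi_1 p(x,y)$ and $\nu\pi_0 p(x,y-1)$. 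Next I multiply the recurrence by $f(x+1,y)$, sum over $(x,y)$, and reindex exactly as in Terms~1--4 of the BVB proof. The left-hand side then becomes $\E[X_1 f(X_1,X_2)]-\pi_2\E[X_1 f(X_1,X_2+1)]$ and the right-hand side becomes $\pi_1\E[(\nu+X_1)f(X_1+1,X_2)]+\pi_0\E[(\nu+X_1)f(X_1+1,X_2+1)]$, which is \eqref{eq:BNB-Stein1}. The moment-existence hypothesis justifies splitting and reindexing the sums. Identity \eqref{eq:BNB-Stein2} follows by symmetry.

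\emph{Sufficiency.} Plugging $f(x,y)=s^xt^y$ into \eqref{eq:BNB-Stein1} and \eqref{eq:BNB-Stein2} gives, after cancelling $s$ (respectively $t$), the PDE system
\[
Q\,\partial_s G=\nu(\pi_1+\pi_0 t)\,G, \qquad Q\,\partial_t G=\nu(\pi_2+\pi_0 s)\,G,
\]
for $|s|,|t|<1$. The moments needed for this choice of $f$ exist because the test function is bounded for $|s|,|t|<1$ and $X_1 s^{X_1}$ is bounded there. Since $Q\neq 0$ on this region and $G(0,0)=\P(X_1=X_2=0)$, I divide to get
\[
\partial_s\log G=-\nu\,\partial_s\log Q, \qquad \partial_t\log G=-\nu\,\partial_t\log Q .
\]
Integrating in $s$ gives $\log G=-\nu\log Q+H(t)$, and the second equation forces $H'\equiv 0$.

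The main obstacle is the logarithm: it is only legitimate where $G\neq 0$. I will avoid it by setting $R=G\,Q^{\nu}$ and computing $\partial_s R=\partial_t R=0$ directly from the PDEs. Hence $R$ is constant on the connected polydisc, so $G=C\,Q^{-\nu}$. Letting $s,t\uparrow 1$ and using Abel continuity together with $G(1,1)=1$ gives $C=c^{\nu}$. Therefore $G$ coincides with \eqref{BNB-pgf} on the open polydisc, and since a pgf determines the distribution, $(X_1,X_2)\sim\BNB(\nu;\pi_1,\pi_2,\pi_0)$.
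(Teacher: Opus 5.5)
Your overall route is the paper's. For necessity you differentiate the pgf, compare coefficients to get a recurrence, multiply by a test function and reindex; your shifted recurrence and its reindexing are correct. For sufficiency you insert $f(x,y)=s^xt^y$, obtain the two first-order PDEs and solve them. Your device $R=G\,Q^{\nu}$ with $\partial_sR=\partial_tR=0$ is cleaner than the paper's argument, which divides by $G$ and takes logarithms without ever checking that $G$ or $Q$ is nonzero.

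The one step that fails is your preliminary bound. The inequality $\pi_1+\pi_2+|\pi_0|<1$ is false on part of the admissible set. Take $\pi_1=\pi_2=\tfrac12$ and $\pi_0=-\tfrac1{10}$: this satisfies $\pi_0>-\pi_1\pi_2$ and $\pi_\bullet=0.9<1$, yet $\pi_1+\pi_2+|\pi_0|=1.1$. Indeed $|\pi_1s+\pi_2t+\pi_0st|=|0.1+i|>1$ at $s=t=i$.

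Your fallback does not close this as written. On a small polydisc you do get $G=CQ^{-\nu}$. But the Abel limit $s,t\uparrow1$, which you use to identify $C$, leaves the region where that identity has been established, so $C$ is not determined.

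The missing observation is the factorization
\[
Q(s,t)=(1-\pi_1 s)(1-\pi_2 t)-(\pi_0+\pi_1\pi_2)\,st .
\]
Since $\pi_0+\pi_1\pi_2>0$, it gives $|Q(s,t)|\ge(1-\pi_1)(1-\pi_2)-(\pi_0+\pi_1\pi_2)=1-\pi_\bullet>0$ for all $|s|,|t|\le 1$. Hence $Q$ has no zeros on the closed bidisc, and $Q^{\nu}$ has a holomorphic branch on the convex open bidisc, normalized by $Q(0,0)^{\nu}=1$. With that, your constancy-of-$R$ and Abel arguments go through unchanged.

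Alternatively, you can run the whole sufficiency argument only for real $(s,t)\in[0,1)^2$. There $Q>0$, because $Q$ is bilinear with corner values $1$, $1-\pi_1$, $1-\pi_2$, $1-\pi_\bullet$. Agreement of $G$ with the BNB pgf on $[0,1)^2$ already determines the pmf.
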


\begin{proof}
We first prove the necessity part. Suppose \((X_1, X_2) \sim \BNB(\nu; \pi_1, \pi_2, \pi_0)\). Let \(p(x_1, x_2) = \mathbb{P}(X_1 = x_1, X_2 = x_2)\) denote the joint pmf. From \eqref{BNB-pgf}, we have
\begin{equation}
G(s,t) = \left(\frac{1 - \pi_\bullet}{1 - \pi_1 s - \pi_2 t - \pi_0 s t}\right)^\nu = \sum_{x_1, x_2} p(x_1, x_2) s^{x_1} t^{x_2}. \label{eq:G-series}
\end{equation}

Differentiating with respect to \(s\) gives:
\begin{equation}
(1 - \pi_1 s - \pi_2 t - \pi_0 s t) \frac{\partial G}{\partial s} = \nu(\pi_1 + \pi_0 t)\, G(s,t). \label{eq:pde-s}
\end{equation}

Now, substituting the series expansion from \eqref{eq:G-series} into \eqref{eq:pde-s} yields
\begin{align*}
\sum_{x_1, x_2} & x_1 p(x_1, x_2) s^{x_1-1} t^{x_2}-\pi_1 \sum_{x_1, x_2} x_1 p(x_1, x_2) s^{x_1} t^{x_2} -\pi_2 \sum_{x_1, x_2} x_1 p(x_1, x_2) s^{x_1-1} t^{x_2+1}\\
 &\qquad-\pi_0 \sum_{x_1, x_2} x_1 p(x_1, x_2) s^{x_1} t^{x_2+1} \\
&= \nu\pi_1 \sum_{x_1, x_2} p(x_1, x_2) s^{x_1} t^{x_2} + \nu\pi_0 \sum_{x_1, x_2} p(x_1, x_2) s^{x_1} t^{x_2+1}.
\end{align*}
Multiplying both sides by \(s\) and comparing the coefficients of \(s^{x_1} t^{x_2}\), we get the following relation:
\begin{align}\label{eq:recurrence1}
x_1 p(x_1, x_2) &- \pi_1 (x_1-1) p(x_1-1, x_2) - \pi_2 x_1 p(x_1, x_2-1) - \pi_0 (x_1-1) p(x_1-1, x_2-1) \notag\\
&= \nu\pi_1 p(x_1-1, x_2) + \nu\pi_0 p(x_1-1, x_2-1).
\end{align}
Next, we use \eqref{eq:recurrence1} for expressing expectations. Multiplying \eqref{eq:recurrence1} by \(f(x_1, x_2)\), summing over all possible \(x_1, x_2\), and re-indexing each sum, we get
\begin{align*}
&\sum_{x_1, x_2} x_1 p(x_1, x_2) f(x_1, x_2) = \mathbb{E}[X_1 f(X_1, X_2)], \\
&\sum_{x_1, x_2} (x_1-1) p(x_1-1, x_2) f(x_1, x_2) = \sum_{y_1, x_2} y_1 p(y_1, x_2) f(y_1+1, x_2) = \mathbb{E}[X_1 f(X_1+1, X_2)], \\
&\sum_{x_1, x_2} x_1 p(x_1, x_2-1) f(x_1, x_2) = \sum_{x_1, y_2} x_1 p(x_1, y_2) f(x_1, y_2+1) = \mathbb{E}[X_1 f(X_1, X_2+1)], \\
&\sum_{x_1, x_2} (x_1-1) p(x_1-1, x_2-1) f(x_1, x_2) = \sum_{y_1, y_2} y_1 p(y_1, y_2) f(y_1+1, y_2+1) = \mathbb{E}[X_1 f(X_1+1, X_2+1)], \\
&\sum_{x_1, x_2} p(x_1-1, x_2) f(x_1, x_2) = \sum_{y_1, x_2} p(y_1, x_2) f(y_1+1, x_2) = \mathbb{E}[f(X_1+1, X_2)], \\
&\sum_{x_1, x_2} p(x_1-1, x_2-1) f(x_1, x_2) = \sum_{y_1, y_2} p(y_1, y_2) f(y_1+1, y_2+1) = \mathbb{E}[f(X_1+1, X_2+1)].
\end{align*}
As a consequence, we have
\begin{align*}
\mathbb{E}[X_1 f(X_1, X_2)] &- \pi_1 \mathbb{E}[X_1 f(X_1+1, X_2)] \\
&- \pi_2 \mathbb{E}[X_1 f(X_1, X_2+1)] - \pi_0 \mathbb{E}[X_1 f(X_1+1, X_2+1)]\notag\\
 &= \nu\pi_1 \mathbb{E}[f(X_1+1, X_2)] + \nu\pi_0 \mathbb{E}[f(X_1+1, X_2+1)],
\end{align*}
which proves \eqref{eq:BNB-Stein1}. By a symmetric argument, \eqref{eq:BNB-Stein2} is obtained similarly.

For the sufficiency part, assume \eqref{eq:BNB-Stein1} and \eqref{eq:BNB-Stein2} hold for all bounded \(f\). Take \(f(x,y) = s^x t^y\) with \(|s|, |t| < 1\), and let \(G(s,t) = \mathbb{E}[s^{X_1} t^{X_2}]\) be the corresponding joint pgf of $(X_1,X_2)$. Then,
\begin{align*}
\mathbb{E}[X_1 s^{X_1} t^{X_2}] &= s\frac{\partial G}{\partial s}, \\
\mathbb{E}[X_1 s^{X_1+1} t^{X_2}] &= s^2\frac{\partial G}{\partial s}, \\
\mathbb{E}[X_1 s^{X_1} t^{X_2+1}] &= st\frac{\partial G}{\partial s}, \\
\mathbb{E}[X_1 s^{X_1+1} t^{X_2+1}] &= s^2 t\frac{\partial G}{\partial s}, \\
\mathbb{E}[s^{X_1+1} t^{X_2}] &= s\, G(s,t), \\
\mathbb{E}[s^{X_1+1} t^{X_2+1}] &= st\, G(s,t).
\end{align*}
Substituting these into \eqref{eq:BNB-Stein1}, we get following partial differential equation:
\begin{equation}
s\frac{\partial G}{\partial s} - \pi_1 s^2\frac{\partial G}{\partial s} - \pi_2 st\frac{\partial G}{\partial s} - \pi_0 s^2 t\frac{\partial G}{\partial s} = \nu\pi_1 s\, G + \nu\pi_0 st\, G.
\end{equation}
Division by \(s\), for \(s \neq 0\), yields
\begin{equation}
(1 - \pi_1 s - \pi_2 t - \pi_0 s t)\frac{\partial G}{\partial s} = \nu(\pi_1 + \pi_0 t)\, G. \label{eq:pde-s-2}
\end{equation}
Similarly, from \eqref{eq:BNB-Stein2}, we obtain
\begin{equation}
(1 - \pi_1 s - \pi_2 t - \pi_0 s t)\frac{\partial G}{\partial t} = \nu(\pi_2 + \pi_0 s)\, G. \label{eq:pde-t-2}
\end{equation}
Dividing \eqref{eq:pde-s-2} and \eqref{eq:pde-t-2} by \(G\) yields
\begin{align}\label{eq:log-s}
\frac{\partial \log G}{\partial s} = \frac{\nu(\pi_1 + \pi_0 t)}{1 - \pi_1 s - \pi_2 t - \pi_0 s t},\quad
\frac{\partial \log G}{\partial t} = \frac{\nu(\pi_2 + \pi_0 s)}{1 - \pi_1 s - \pi_2 t - \pi_0 s t}.
\end{align}
Integrating the first equation in \eqref{eq:log-s} with respect to \(s\) (treating \(t\) as constant), we get
\begin{equation}
\log G(s,t) = -\nu \log(1 - \pi_1 s - \pi_2 t - \pi_0 s t) + H(t), \label{eq:logG}
\end{equation}
where \(H(t)\) depends only on \(t\). Differentiating \eqref{eq:logG} with respect to \(t\) and comparing with the second equation of \eqref{eq:log-s} gives
\begin{equation}
\frac{\nu(\pi_2 + \pi_0 s)}{1 - \pi_1 s - \pi_2 t - \pi_0 s t} + H'(t) = \frac{\nu(\pi_2 + \pi_0 s)}{1 - \pi_1 s - \pi_2 t - \pi_0 s t}.
\end{equation}
Thus, \(H'(t) = 0\), so \(H(t)\) is constant. Since \(G(1,1) = 1\), \eqref{eq:logG} implies the equality
\begin{equation}
0 = -\nu \log(1 - \pi_1 - \pi_2 - \pi_0) + H(1).
\end{equation}
Therefore \(H(t) \equiv \nu \log(1 - \pi_\bullet)\), and thus
\begin{equation}
G(s,t) = \left(\frac{1 - \pi_\bullet}{1 - \pi_1 s - \pi_2 t - \pi_0 s t}\right)^\nu,
\end{equation}
which is exactly \eqref{BNB-pgf}. This completes the proof of Theorem~\ref{thm:stein-bnb}.
\end{proof}

\begin{rmk}
When \(\nu \to \infty\) and \(\nu\pi_i \to \lambda_i\) for \(i=0,1,2\) (hence, $\pi_i\to 0$ for \(i=0,1,2\)), then \eqref{eq:BNB-Stein1} and \eqref{eq:BNB-Stein2} reduce to
\begin{align*}
\mathbb{E}[X_1 f(X_1, X_2)] &= \lambda_1 \mathbb{E}[f(X_1 + 1, X_2)] + \lambda_0 \mathbb{E}[f(X_1 + 1, X_2 + 1)], \\
\mathbb{E}[X_2 f(X_1, X_2)] &= \lambda_2 \mathbb{E}[f(X_1, X_2 + 1)] + \lambda_0 \mathbb{E}[f(X_1 + 1, X_2 + 1)],
\end{align*}
which are exactly the Stein characterizations for the bivariate Poisson distribution in Theorem \ref{main-thm}.
\end{rmk}

\begin{rmk}
In the boundary case $\pi_2 = \pi_0 = 0$, the $\BNB(\nu;\pi_{1},\pi_{2},\pi_{0})$ distribution reduces to a univariate NB-distribution for $X_1$ and $X_2 \equiv 0$. In this case, $\pi_\bullet = \pi_1$, and the Stein characterization \eqref{eq:BNB-Stein1} becomes
\begin{equation}
\mathbb{E}[X_1 g(X_1)] =\pi_1 \mathbb{E}[(\nu+X_1) g(X_1+1)]
\end{equation}
for some univariate bounded function $g$. This is the classical Stein characterization for $\NB(\nu,1-\pi_1)$, recall \eqref{SteinNB}.  We can obtain a similar result if we take functions $f(x,y)$ that depend on only one of the variables~$x,y$, in analogy to Remark \ref{rmk-2Pois-1Pois}.
\end{rmk}

\section{Applications}
\label{Applications}
As briefly surveyed in Section~\ref{Introduction}, Stein characterizations of distributions generally offer a broad potential for applications. In what follows, we briefly sketch possible application areas of the Stein characterizations for the bivariate Poisson, binomial, and negative-binomial distribution as provided by Theorems~\ref{main-thm}--\ref{thm:stein-bnb}, whereas a more comprehensive investigation will be conducted in separate articles as part of future research.

\subsection{Moment Calculations}
\label{Moment Calculations}
In \citet{weiss2022}, Stein identities of the bivariate Poisson distribution were used for moment computations, namely by deriving recursive schemes for certain families of joint moments. Since their investigations have been quite comprehensive, we just add one illustrative example to the $\BPoi(\lambda_0; \lambda_1,\lambda_2)$-derivations of \citet{weiss2022}, namely concerning the quadratic deviation between~$X_1$ and~$X_2$. Taking the difference of equations \eqref{thm-eq-1} and \eqref{thm-eq-2}, we obtain
  \begin{equation}\label{thm-eq-3}
    \E[(X_1-X_2)\, f(X_1,X_2)]=\lambda_1\, \E[f(X_1+1,X_2)] - \lambda_2\, \E[f(X_1,X_2+1)].
  \end{equation}
Hence, using $f(x,y)=x-y$, it immediately follows from \eqref{thm-eq-3} that
  \begin{equation}\label{BPoi_Euclidean}
    \E[(X_1-X_2)^2]
		=\lambda_1\, \E[X_1+1-X_2] - \lambda_2\, \E[X_1-X_2-1]
		=\lambda_1 + \lambda_2 + (\lambda_1 - \lambda_2)^2.
  \end{equation}
The expected absolute difference, in turn, follows by using the sign function $f(x,y)=\textrm{sgn}(x-y)$ instead. 

Considerably fewer moment formulae have been derived for the bivariate binomial and NB distribution so far.
The novel Stein identities for the bivariate binomial distribution in Section~\ref{Stein Characterization of Bivariate Binomial Distribution} and for the bivariate NB distribution in Section~\ref{Stein Characterization of Bivariate Negative-binomial Distribution} can be applied in a similar way as before, which is illustrated in the sequel for some examples.

First, defining the function $f(x_1,x_2)=\indfct(x_1=x)\cdot\indfct(x_2=y)$ for fixed values $x,y\in\N$, where $\indfct(\cdot)$ denotes the indicator function, the Stein characterization of $\BVB(N; a_1, a_2, \phi_a)$ according to Theorem~\ref{thm:stein-bvb} implies the following recursive schemes for probability calculations:
\ba
\label{BVB_prob}
\begin{array}{r@{\,}l}
p_{00}\, x\,p(x,y) =& -p_{01}\, x\,p(x,y-1) \\
&+ p_{10}\, (N-x+1) p(x-1,y) + p_{11}\, (N-x+1)\, p(x-1,y-1), \\[1ex]
p_{00}\, y\,p(x,y) =& -p_{10}\, y\,p(x-1,y) \\
&+ p_{01}\, (N-y+1)\,p(x,y-1) + p_{11}\, (N-y+1)\,p(x-1,y-1),
\end{array}
\ea
which is analogous to the recursive scheme \eqref{BPoi_prop} for the bivariate Poisson distribution.
Similarly, Theorem~\ref{thm:stein-bnb} for $\BNB(\nu;\pi_{1},\pi_{2},\pi_{0})$ implies that
\ba
\label{BNB_prob}
\begin{array}{r@{\,}l}
x\,p(x,y) =& \pi_2\, x\,p(x,y-1) \\
 &+ \pi_1\, (\nu+x-1)\,p(x-1,y) + \pi_0\, (\nu+x-1)\,p(x-1, y-1), \\[1ex]
y\, p(x,y) =& \pi_1\, y\, p(x-1,y)  \\
 &+ \pi_2\, (\nu+y-1)\,p(x, y-1)] + \pi_0\, (\nu+y-1)\,p(x-1, y-1).
\end{array}
\ea
Both schemes \eqref{BVB_prob} and \eqref{BNB_prob} are useful for probability calculations in practice, because the closed formulae for the respective pmfs of $\BVB(N; a_1, a_2, \phi_a)$ and $\BNB(\nu;\pi_{1},\pi_{2},\pi_{0})$ as given in \citet{kocherlakota14} are rather sophisticated and computationally demanding.

Next, let us consider joint factorial moments of $\BVB(N; a_1, a_2, \phi_a)$ and $\BNB(\nu;\pi_{1},\pi_{2},\pi_{0})$. Denote the falling factorials by $x_{(k)}=x\cdots (x-k+1)$ with $x_{(0)}=1$ and the corresponding bivariate factorial moments by $\mu_{(r,s)}:=\E[(X_1)_{(r)}\,(X_2)_{(s)}]$. Obviously, the marginal factorial moments arise by setting either $s=0$ or $r=0$.
Now, inserting $f(x,y)=(x-1)_{(r-1)}\, y_{(s)}$ into \eqref{eq:stein1}, we obtain
\begin{align}
p_{00}\,\mathbb{E}[(X_1)_{(r)}\,&\, (X_2)_{(s)}] + p_{01}\, \mathbb{E}[(X_1)_{(r)}\, (X_2+1)_{(s)}] \notag \\
 &= p_{10}\, \mathbb{E}[(N-X_1)\,(X_1)_{(r-1)}\, (X_2)_{(s)}] + p_{11}\, \mathbb{E}[(N-X_1)\,(X_1)_{(r-1)}\, (X_2+1)_{(s)}]. \notag
\end{align}
Here, we can split
$(N-X_1)\,(X_1)_{(r-1)}
= (N-r+1)\,(X_1)_{(r-1)} - (X_1)_{(r)}$ and
$(X_2+1)_{(s)} = (X_2-s+1\ +s)\,(X_2)_{(s-1)} = (X_2)_{(s)} + s\,(X_2)_{(s-1)}$. Hence, the joint factorial moments $\mu_{(r,s)}$ satisfy the recursive scheme
\begin{align}
p_{00}\,\mu_{(r,s)} &+ p_{01}\, \mu_{(r,s)} + p_{01}\, s\,\mu_{(r,s-1)} = p_{10}\, (N-r+1)\,\mu_{(r-1,s)} - p_{10}\, \mu_{(r,s)} \notag \\
 &
 + p_{11}\, \big((N-r+1)\,\mu_{(r-1,s)} + (N-r+1)\,s\,\mu_{(r-1,s-1)} - \mu_{(r,s)} - s\, \mu_{(r,s-1)}\big). \notag
\end{align}
This can be rewritten as
\ba
\label{facmom_BVB}
\mu_{(r,s)} =  (N-r+1)\,a_1\, \mu_{(r-1,s)} - s\,a_2\, \mu_{(r,s-1)} + (N-r+1)\,s\,a\, \mu_{(r-1,s-1)}.
\ea
Analogously, we could use \eqref{eq:stein2} for deriving a mirror-inverted recursive scheme for $\BVB(N; a_1, a_2, \phi_a)$.

Next, inserting $f(x,y)=(x-1)_{(r-1)}\, y_{(s)}$ into \eqref{eq:BNB-Stein1} (or analogously into \eqref{eq:BNB-Stein2}), we obtain
\begin{align}
\mathbb{E}[(X_1)_{(r)}\,&\, (X_2)_{(s)}] - \pi_2\, \mathbb{E}[(X_1)_{(r)}\, (X_2+1)_{(s)}] \notag \\
 &= \pi_1\, \mathbb{E}[(\nu+X_1)\,(X_1)_{(r-1)}\, (X_2)_{(s)}] + \pi_0\, \mathbb{E}[(\nu+X_1)\,(X_1)_{(r-1)}\, (X_2+1)_{(s)}]. \notag
\end{align}
Here, we can split
$(\nu+X_1)\,(X_1)_{(r-1)}
= (\nu+r-1)\,(X_1)_{(r-1)} + (X_1)_{(r)}$, and
$(X_2+1)_{(s)} = (X_2)_{(s)} + s\,(X_2)_{(s-1)}$ as before. Hence, the joint factorial moments $\mu_{(r,s)}$ satisfy the recursive scheme
\begin{align}
\mu_{(r,s)} &- \pi_2\, \mu_{(r,s)} - \pi_2\, s\,\mu_{(r,s-1)} = \pi_1\, (\nu+r-1)\,\mu_{(r-1,s)} + \pi_1\, \mu_{(r,s)} \notag \\
 &
 + \pi_0\, \big((\nu+r-1)\,\mu_{(r-1,s)} + (\nu+r-1)\,s\,\mu_{(r-1,s-1)} + \mu_{(r,s)} + s\, \mu_{(r,s-1)}\big). \notag
\end{align}
This can be rewritten as
\begin{align}
(1-\pi_\bullet)\,\mu_{(r,s)} =  & (\nu+r-1)\,(\pi_1+\pi_0)\, \mu_{(r-1,s)} \label{facmom_BNB} \\
 &
 + s\,(\pi_2+\pi_0)\, \mu_{(r,s-1)} + (\nu+r-1)\,s\,\pi_0\, \mu_{(r-1,s-1)}. \notag
\end{align}
Both schemes \eqref{facmom_BVB} and \eqref{facmom_BNB} are very useful for practice, as the closed-form expressions for the factorial moments of $\BVB(N; a_1, a_2, \phi_a)$ and $\BNB(\nu;\pi_{1},\pi_{2},\pi_{0})$, as given in \citet{kocherlakota14}, are quite cumbersome. Generally, it appears to be a promising task for future research to derive further moment formulae for $\BVB(N; a_1, a_2, \phi_a)$ and $\BNB(\nu;\pi_{1},\pi_{2},\pi_{0})$, e.g., expressions for inverse or partial moments, like it was done by \citet{weiss2022} in case of the bivariate Poisson distribution.

\subsection{Goodness-of-fit Testing}
\label{Goodness-of-fit Testing}
As already surveyed in Section~\ref{Introduction}, univariate Stein identities have been widely used for constructing goodness-of-fit (GoF) tests, see \citet{betsch22,weiss2023,aleksandrov22,aleksandrov24} for approaches and references. GoF-testing of bivariate distributions is more demanding due to the increased complexity of the probability models, but Stein characterizations again offer the potential for constructing appropriate test statistics. This shall be briefly illustrated in the sequel for the case of testing a bivariate-Poisson null hypothesis, while more comprehensive analyses as well as different null hypotheses are recommended for future research. Inspired by existing works on univariate Stein-GoF tests for Poissonity, we propose to derive a test statistic from \eqref{thm-eq-3}, namely
\ba
\label{BPoi_GoF}
T_{1;f} := \frac{(\mu_1-\sqrt{\mu_1\mu_2}\,\rho)\, \E[f(X_1+1,X_2)] - (\mu_2-\sqrt{\mu_1\mu_2}\,\rho)\, \E[f(X_1,X_2+1)]}{\E[(X_1-X_2)\, f(X_1,X_2)]},
\ea
where $\E[(X_1-X_2)\, f(X_1,X_2)] \not=0$ is required, and where $\lambda_1,\lambda_2$ in \eqref{thm-eq-3} were substituted by the moment expressions $\lambda_i=\mu_i-\sqrt{\mu_1\mu_2}\,\rho$ for $i=1,2$ with $\mu_i=\E[X_i]$ and $\rho=\operatorname{Corr}(X_1,X_2)$. Here, we followed the recommendation of \citet{best97} to express~$\lambda_0$ as $\sqrt{\mu_1\mu_2}\,\rho$ instead of $\operatorname{Cov}(X_1,X_2)$. The sample version~$\hat{T}_{1;f}$ of statistic \eqref{BPoi_GoF} follows by substituting all population moments by corresponding sample moments. If the null hypothesis $\fX = (X_1,X_2)^\top\sim \BPoi(\lambda_0; \lambda_1,\lambda_2)$ is satisfied, then $T_{1;f}=1$ for any choice of~$f$, so deviations of~$\hat{T}_{1;f}$ from~1 may indicate a violation of the BPoi-null. 

\begin{rmk}
\label{rmk_omnibus}
Note that focusing on a particular choice of~$f$ does not lead to a consistent test as the Stein characterization requires $T_{1;f}=1$ to hold for all possible choices of~$f$. Instead, the $\hat{T}_{1;f}$-test follows a similar intuition as the popular dispersion index test, i.e., where the test focuses on particular types of alternatives (``targeted diagnosis''). If one wants to obtain an omnibus test that is powerful against a large class of alternatives, a possible solution is to define an integral test statistic in analogy to Section~4.4 in \citet{aleksandrov24}. But for the sake of illustration, let us focus targeted $\hat{T}_{1;f}$-tests in the sequel, whereas the development and investigation of omnibus tests is recommended for future research.
\end{rmk}
Before looking at some simulation results, it is interesting to study the special case $f(x,y)=x-y$. Then, denoting $\sigma_i^2=\V[X_i]$ for $i=1,2$, $T_{1;f}$ becomes
\begin{align*}
T_{1;x-y}
= & \frac{(\mu_1-\sqrt{\mu_1\mu_2}\,\rho)\, (\mu_1-\mu_2+1) - (\mu_2-\sqrt{\mu_1\mu_2}\,\rho)\, (\mu_1-\mu_2-1)}{\sigma_1^2+\mu_1^2 - 2 (\sqrt{\mu_1\mu_2}\,\rho+\mu_1\mu_2) + \sigma_2^2+\mu_2^2}
\\
= & \frac{\mu_1 + \mu_2 +(\mu_1-\mu_2)^2 -2\sqrt{\mu_1\mu_2}\,\rho}{\sigma_1^2 + \sigma_2^2+(\mu_1-\mu_2)^2 - 2 \sqrt{\mu_1\mu_2}\,\rho}.
\end{align*}
This shows that the special case $f(x,y)=x-y$ leads to a kind of bivariate dispersion index, similar to those in \citet{crockett79}, \citet{loukas86}, \citet{best97}, and \citet{weiss2019}, for example. Conversely, this implies that the ``bivariate Stein index'' \eqref{BPoi_GoF} constitutes a flexible extension of existing bivariate dispersion indexes.

\bigskip
While more comprehensive analyses are planned for future research, we did an initial simulation study in order to investigate the potential of BPoi-GoF tests based on a bivariate Stein index according to \eqref{BPoi_GoF}. We considered two instances of the ``weight function''~$f$ involved in \eqref{BPoi_GoF}, namely the alternating functions $f_a(x,y)=x^a-y^a$ with $a\in\{1/2, 1\}$. As competitors, we initially considered the ``classical'' dispersion indexes investigated by \citet{best97}. However, in agreement with their analyses, we recognized severe size distortions for the tests related to the index of \citet{loukas86} (denoted by $I_B$ and $I_B^*$ in \citet{best97}). Thus, we finally decided to focus on the recommended dispersion-index test of \citet{best97}, namely
\ba
\label{Best}
T^*\ =\ \frac{\mu_2^2(\sigma_1^2-\mu_1)^2 + \mu_1^2(\sigma_2^2-\mu_2)^2 - 2\mu_1\mu_2\,(\sigma_1^2-\mu_1)\,(\sigma_2^2-\mu_2)\,\rho^2}{2\mu_1^2\mu_2^2(1-\rho^4)},
\ea
which constitutes a simplified type of Crockett's index \citep{crockett79}, and where we corrected a misprint in the denominator on p.~427 of \citet{best97}. The dispersion-index test uses the sample counterpart~$\hat{T}^*$ as the test statistic, computed from a bivariate count data set $\fX_1,\ldots,\fX_n$. Its asymptotic distribution satisfies $n\,\hat{T}^* \sim \chi_2^2$ under the null hypothesis of BPoi-counts.

In our simulation study, we considered all null and alternative scenarios like in \citet{best97} with sample sizes $n\in\{50, 100, 200, 500\}$, complemented by further alternatives from the BNB- and BVB-family. More precisely, the considered distributions are
\ba
\label{DistSim}
\begin{array}{@{}l}
\text{``BPoi-1''} = \BPoi(0.1;1.25,0.8),
\quad
\text{``BPoi-2''} = \BPoi(1;5,5),
\\
\text{``BPoi-3''} = \BPoi(0.1;0.2,0.3),
\quad
\text{``BPoi-4''} = \BPoi(1;2.5,2.25),
\\
\text{``BPoi-5''} = \BPoi(1;1,1),
\quad
\text{``BPoi-6''} = \BPoi(0.8;0.2,0.3),
\\
\text{``BPoi-7''} = \BPoi(4;1,1),
\\
\text{``BHerm-1''} = \BHerm(0.75,0.25,0.5,0.15,0.1),
\\
\text{``BHerm-2''} = \BHerm(1,0.75,1.25,0.5,1),
\\
\text{``BHerm-3''} = \BHerm(2,1.5,2,1.5,1),
\\
\text{``BVB-1''} = \BVB(10;0.35,0.325,0.3),
\quad
\text{``BVB-2''} = \BVB(10;0.2,0.2,0.5),
\\
\text{``BNB-1''} = \BNB(9.5;0.2,0.19,0.02),
\quad
\text{``BNB-2''} = \BNB(5;0.2,0.2,0.05),
\end{array}\hspace{-3ex}
\ea
where BPoi-1 to BPoi-7 satisfy the BPoi-null hypothesis, whereas the other scenarios are alternatives. Here, ``BHerm'' refers to a bivariate Hermite distribution with parameters $\lambda_1,\ldots,\lambda_5>0$, which is the distribution of $(Z_1+2\,Z_2+Z_5, Z_3+2\,Z_4+Z_5)$ with mutually independent $Z_i\sim\Poi(\lambda_i)$.
The rejection rates (size or power, respectively) were determined based on 10,000 replications per scenario, where we used a parametric bootstrap approach for implementing the Stein-index tests. Here, to reduce the computational burden of the simulations, the bootstrap simulations are implemented according to the warp-speed method of \citet{giacomini13}.

\begin{table}[th]
\centering\small
\caption{Simulated rejection rates of Stein-index tests $T_{1;f_a}$ with $f_a(x,y)=x^a-y^a$ and $a\in\{1/2, 1\}$ as well as $\hat{T}^*$-test for distributions \eqref{DistSim}, see Section~\ref{Goodness-of-fit Testing} for details.}
\label{tab_Stein_GoF_BPoi}

\smallskip
\begin{tabular}{rl|ccc|l|ccc}
\toprule
\multicolumn{1}{l}{$n$} & Distrib. & $T^*$ & $T_{1;f_1}$ & $T_{1;f_{1/2}}$ & Distrib. & $T^*$ & $T_{1;f_1}$ & $T_{1;f_{1/2}}$ \\
\midrule
50 & BPoi-1 & 0.054 & 0.051 & 0.046 & BHerm-1 & 0.555 & 0.538 & 0.416 \\
100 &  & 0.053 & 0.047 & 0.049 &  & 0.820 & 0.831 & 0.729 \\
200 &  & 0.053 & 0.049 & 0.049 &  & 0.979 & 0.985 & 0.958 \\
500 &  & 0.054 & 0.053 & 0.052 &  & 1.000 & 1.000 & 1.000 \\
\midrule
50 & BPoi-2 & 0.054 & 0.051 & 0.051 & BHerm-2 & 0.595 & 0.574 & 0.471 \\
100 &  & 0.053 & 0.050 & 0.052 &  & 0.851 & 0.872 & 0.793 \\
200 &  & 0.050 & 0.056 & 0.056 &  & 0.986 & 0.992 & 0.973 \\
500 &  & 0.050 & 0.052 & 0.053 &  & 1.000 & 1.000 & 1.000 \\
\midrule
50 & BPoi-3 & 0.052 & 0.054 & 0.045 & BHerm-3 & 0.815 & 0.819 & 0.810 \\
100 &  & 0.045 & 0.052 & 0.046 &  & 0.971 & 0.977 & 0.976 \\
200 &  & 0.049 & 0.054 & 0.052 &  & 1.000 & 1.000 & 1.000 \\
500 &  & 0.046 & 0.050 & 0.049 &  & 1.000 & 1.000 & 1.000 \\
\midrule
50 & BPoi-4 & 0.052 & 0.052 & 0.051 & BVB-1 & 0.463 & 0.768 & 0.709 \\
100 &  & 0.051 & 0.052 & 0.052 &  & 0.906 & 0.980 & 0.950 \\
200 &  & 0.053 & 0.055 & 0.050 &  & 0.999 & 1.000 & 1.000 \\
500 &  & 0.047 & 0.051 & 0.049 &  & 1.000 & 1.000 & 1.000 \\
\midrule
50 & BPoi-5 & 0.054 & 0.053 & 0.053 & BVB-2 & 0.088 & 0.259 & 0.214 \\
100 &  & 0.052 & 0.050 & 0.053 &  & 0.261 & 0.497 & 0.366 \\
200 &  & 0.053 & 0.046 & 0.052 &  & 0.610 & 0.822 & 0.629 \\
500 &  & 0.050 & 0.050 & 0.047 &  & 0.982 & 0.996 & 0.947 \\
\midrule
50 & BPoi-6 & 0.061 & 0.053 & 0.051 & BNB-1 & 0.570 & 0.542 & 0.444 \\
100 &  & 0.054 & 0.049 & 0.050 &  & 0.810 & 0.826 & 0.736 \\
200 &  & 0.051 & 0.051 & 0.051 &  & 0.972 & 0.977 & 0.950 \\
500 &  & 0.050 & 0.044 & 0.049 &  & 1.000 & 1.000 & 1.000 \\
\midrule
50 & BPoi-7 & 0.058 & 0.057 & 0.055 & BNB-2 & 0.676 & 0.644 & 0.455 \\
100 &  & 0.055 & 0.054 & 0.050 &  & 0.899 & 0.904 & 0.751 \\
200 &  & 0.051 & 0.046 & 0.047 &  & 0.994 & 0.995 & 0.961 \\
500 &  & 0.051 & 0.056 & 0.055 &  & 1.000 & 1.000 & 1.000 \\
\bottomrule
\end{tabular}
\end{table}

The obtained simulation results are summarized in Table~\ref{tab_Stein_GoF_BPoi}. The rejection rates in the left part are simulated sizes, which are reasonably close to the nominal level of~5\,\%. Hence, any of the considered tests holds the level to a satisfactory extent. The rejection rates in the right part, in turn, are simulated power values, which should be as large as possible. For the BHerm- and BNB-alternatives, which are characterized by overdispersion (i.e., their variances exceed the respective means), we recognize that~$T^*$ is most powerful for $n=50$, but~$T_{1;f_1}$ is otherwise. For the BVB-alternatives (exhibiting underdispersion), the $T_{1;f_1}$-test is even most powerful without exception, and especially much more powerful than the $T^*$-test. So altogether, the Stein-index approach shows an appealing finite-sample performance, although we only investigated two simple choices for~$f$ so far. For future research, it is recommended to extend the performance analyses to different choices of~$f$ (and omnibus tests) as well, and to find optimal choices for~$f$ in analogy to the analyses in \citet{weiss2023}.

\subsection{Testing for Symmetry (Exchangeability)}
\label{Testing for Symmetry}
In analogy to the general GoF-testing discussed in Section~\ref{Goodness-of-fit Testing}, bivariate Stein characterizations may also be utilized for the more specific task of testing for the symmetry of the considered bivariate distribution, i.e., if $p(x,y)=p(y,x)$ for all~$x,y$. Equivalently, the symmetry null requires that $(X_1,X_2)$ has the same distribution as $(X_2,X_1)$, i.e., $X_1$ and~$X_2$ are exchangeable.

Let us start again with the case $(X_1,X_2)\sim\BPoi(\lambda_0;\lambda_1,\lambda_2)$, then the additional requirement of symmetry (exchangeability) is equivalent to requiring $\lambda_1=\lambda_2 =:\lambda$. Consider the moment $\E[f(X_1+1,X_2+1)]$, which appears in both equations \eqref{thm-eq-1} and \eqref{thm-eq-2}.
If~$f$ is an alternating function, i.e., if $f(x,y)=-f(y,x)$ (like the examples considered in Section~\ref{Goodness-of-fit Testing}), then
$\E[f(X_1+1,X_2+1)] = -\E[f(X_2+1,X_1+1)]$. Under the \emph{additional} exchangeability of $(X_1,X_2)$, the last moment satisfies $\E[f(X_2+1,X_1+1)]=\E[f(X_1+1,X_2+1)]$, i.e., altogether, we obtain
$$
\E[f(X_1+1,X_2+1)] = -\E[f(X_2+1,X_1+1)] = -\E[f(X_1+1,X_2+1)],
$$
so $\E[f(X_1+1,X_2+1)]=0$ follows. Hence, if $(X_1,X_2)\sim\BPoi(\lambda_0;\lambda,\lambda)$, then the original identities \eqref{thm-eq-1} and \eqref{thm-eq-2} reduce to
  \begin{align*}
    \E[X_1\, f(X_1,X_2)]=&\lambda\, \E[f(X_1+1,X_2)],
  \\
    \E[X_2\, f(X_1,X_2)]=&\lambda\, \E[f(X_1,X_2+1)].
  \end{align*}
This suggests defining the test statistic
\ba
\label{SymmTest_BPoi}
T_{2;f} := \Big|\E[X_1\, f(X_1,X_2)]-\lambda\, \E[f(X_1+1,X_2)]\Big| + \Big|\E[X_2\, f(X_1,X_2)]-\lambda\, \E[f(X_1,X_2+1)]\Big|
\ea
with alternating function~$f$. $\lambda$ may be substituted by the moment expression $\mu(1-\rho)$ with $\mu=\frac{1}{2}(\mu_1+\mu_2)$,
and the sample version~$\hat{T}_{2;f}$ of statistic \eqref{SymmTest_BPoi} follows again by substituting all population moments by corresponding sample moments.
If $(X_1,X_2)\sim\BPoi(\lambda_0;\lambda,\lambda)$, then $T_{2;f}=0$, so $\hat{T}_{2;f}>0$ indicates a possible violation of the $\BPoi(\lambda_0;\lambda,\lambda)$-null.

Instead of this specific symmetry hypothesis (symmetry within the BPoi-distribution), one may also more generally consider the null of a symmetric bivariate distribution (this would also cover, e.g., a symmetric BVB- or BNB-distribution). Then, we can just look at the moments $\E[f(X_1+1,X_2)]$ and $\E[f(X_1,X_2+1)]$. If~$f$ is alternating, and if $(X_1,X_2)$ are exchangeable, then
$$
\E[f(X_1+1,X_2)] = -\E[f(X_2,X_1+1)] = -\E[f(X_1,X_2+1)],
$$
i.e., the test statistic
\ba
\label{SymmTest_gen}
T_{3;f} := \E[f(X_1+1,X_2)] + \E[f(X_1,X_2+1)]
\ea
with alternating function~$f$ becomes zero for $(X_1,X_2)$ being exchangeable. For the sample version~$\hat{T}_{3;f}$ of statistic \eqref{SymmTest_gen}, we again substitute all population moments by corresponding sample moments, and deviations of~$\hat{T}_{3;f}$ from~0 indicate a possible violation of the (general) symmetry-null. Note, however, that although the test statistic~$T_{3;f}$ itself is not restricted to a certain distribution family, the implementation of the test (either based on asymptotic derivations or a bootstrap approach) typically requires parametric assumptions.

\begin{rmk}
\label{rmkTimeSeries}
The case of a symmetric bivariate count distribution is particularly relevant in time series analysis.
Let $(X_t)_{t\in\Z=\{\ldots,-1,0,1,\ldots\}}$ be a stationary count process with mean~$\mu>0$ and autocorrelation function $\rho(h)=\operatorname{Corr}(X_t,X_{t-h})$ with time lag $h\in\N$. There is a large family of so-called Poisson integer-valued autoregressive moving-average (INARMA) models, for which pairs $(X_t,X_{t-h})$ are distributed according to $\BPoi\big(\rho(h)\,\mu;\ (1-\rho(h))\,\mu,\ (1-\rho(h))\,\mu\big)$, see Section~3.1 in \citet{aleksandrov24}. Thus, together with \eqref{thm-eq-3}, it follows that the equality
  \begin{equation}
	\label{PoiINARMA_statistic}
    \frac{\E\big[(X_t-X_{t-h})\, f(X_t,X_{t-h})\big]}{\E\big[f(X_t+1,X_{t-h}) - f(X_t,X_{t-h}+1)\big]} = \big(1-\rho(h)\big)\,\mu
  \end{equation}
has to hold, irrespective of the choice of~$f$, as long as the expectation in the denominator is non-zero. Note that the latter excludes the use of any symmetric function~$f$, only such~$f$ satisfying $f(x,y)\not=f(y,x)$ are allowed within \eqref{PoiINARMA_statistic}. It should be noted that analogous model families with symmetric BVB- or BNB-distributions for the pairs $(X_t,X_{t-h})$ exist, see \citet{aleksandrov22} and \citet{aleksandrov24} for details.
Future research should investigate if relations such as \eqref{PoiINARMA_statistic} could be useful for model diagnostics in count time series analysis, in analogy to \citet{weiss2019}.
\end{rmk}

\begin{table}[th]
\centering\small
\caption{Simulated rejection rates of symmetry tests~$T_{2;f_a}$ and~$T_{3;f_a}$ with $f_a(x,y)=x^a-y^a$ and $a\in\{1/2, 1\}$ for distributions \eqref{DistSim}, see Section~\ref{Testing for Symmetry} for details.}
\label{tab_Stein_Sym_BPoi}

\smallskip
\begin{tabular}{rl|cccc|l|cccc}
\toprule
\multicolumn{1}{l}{$n$} & Distrib. & $T_{2;f_1}$ & $T_{2;f_{1/2}}$ & $T_{3;f_1}$ & $T_{3;f_{1/2}}$ & Distrib. & $T_{2;f_1}$ & $T_{2;f_{1/2}}$ & $T_{3;f_1}$ & $T_{3;f_{1/2}}$ \\
\midrule
50 & BPoi-1 & 0.225 & 0.371 & 0.579 & 0.558 & BHerm-1 & 0.591 & 0.615 & 0.566 & 0.545 \\
100 &  & 0.419 & 0.650 & 0.874 & 0.871 &  & 0.886 & 0.883 & 0.833 & 0.808 \\
200 &  & 0.720 & 0.916 & 0.991 & 0.989 &  & 0.996 & 0.995 & 0.982 & 0.976 \\
500 &  & 0.985 & 1.000 & 1.000 & 1.000 &  & 1.000 & 1.000 & 1.000 & 1.000 \\
\midrule
50 & BPoi-2 & 0.050 & 0.052 & 0.049 & 0.055 & BHerm-2 & 0.385 & 0.379 & 0.144 & 0.135 \\
100 &  & 0.046 & 0.045 & 0.050 & 0.051 &  & 0.669 & 0.643 & 0.230 & 0.195 \\
200 &  & 0.049 & 0.046 & 0.042 & 0.043 &  & 0.948 & 0.922 & 0.359 & 0.287 \\
500 &  & 0.054 & 0.053 & 0.050 & 0.053 &  & 1.000 & 1.000 & 0.645 & 0.527 \\
\midrule
50 & BPoi-3 & 0.106 & 0.153 & 0.136 & 0.155 & BHerm-3 & 0.615 & 0.657 & 0.102 & 0.117 \\
100 &  & 0.178 & 0.264 & 0.262 & 0.279 &  & 0.918 & 0.924 & 0.106 & 0.113 \\
200 &  & 0.341 & 0.488 & 0.521 & 0.500 &  & 0.999 & 0.999 & 0.108 & 0.117 \\
500 &  & 0.701 & 0.859 & 0.881 & 0.864 &  & 1.000 & 1.000 & 0.113 & 0.122 \\
\midrule
50 & BPoi-4 & 0.083 & 0.093 & 0.121 & 0.120 & BVB-1 & 0.069 & 0.101 & 0.071 & 0.068 \\
100 &  & 0.128 & 0.143 & 0.196 & 0.199 &  & 0.263 & 0.350 & 0.158 & 0.159 \\
200 &  & 0.204 & 0.234 & 0.368 & 0.343 &  & 0.836 & 0.850 & 0.332 & 0.322 \\
500 &  & 0.439 & 0.509 & 0.732 & 0.706 &  & 1.000 & 1.000 & 0.767 & 0.747 \\
\midrule
50 & BPoi-5 & 0.057 & 0.055 & 0.048 & 0.050 & BVB-2 & 0.016 & 0.026 & 0.020 & 0.027 \\
100 &  & 0.046 & 0.048 & 0.043 & 0.046 &  & 0.017 & 0.025 & 0.026 & 0.030 \\
200 &  & 0.047 & 0.050 & 0.049 & 0.049 &  & 0.023 & 0.042 & 0.027 & 0.029 \\
500 &  & 0.045 & 0.044 & 0.043 & 0.043 &  & 0.183 & 0.240 & 0.029 & 0.034 \\
\midrule
50 & BPoi-6 & 0.124 & 0.164 & 0.136 & 0.150 & BNB-1 & 0.265 & 0.272 & 0.127 & 0.132 \\
100 &  & 0.197 & 0.260 & 0.261 & 0.265 &  & 0.456 & 0.466 & 0.161 & 0.165 \\
200 &  & 0.348 & 0.469 & 0.512 & 0.476 &  & 0.807 & 0.818 & 0.233 & 0.230 \\
500 &  & 0.718 & 0.856 & 0.881 & 0.845 &  & 0.999 & 0.998 & 0.415 & 0.392 \\
\midrule
50 & BPoi-7 & 0.047 & 0.047 & 0.038 & 0.041 & BNB-2 & 0.271 & 0.250 & 0.096 & 0.094 \\
100 &  & 0.051 & 0.052 & 0.044 & 0.048 &  & 0.479 & 0.461 & 0.099 & 0.096 \\
200 &  & 0.054 & 0.054 & 0.046 & 0.052 &  & 0.820 & 0.770 & 0.100 & 0.094 \\
500 &  & 0.053 & 0.053 & 0.048 & 0.049 &  & 1.000 & 0.998 & 0.101 & 0.095 \\
\bottomrule
\end{tabular}
\end{table}

Like for the GoF-tests in Section~\ref{Goodness-of-fit Testing}, we did a simulation study with 10,000 replications per scenario for investigating the finite-sample performance of both the~$T_{2;f}$- and $T_{3;f}$-test, where we consider again the weight functions $f_a(x,y)=x^a-y^a$ with $a\in\{1/2, 1\}$. We implement the tests by a parametric bootstrap approach assuming the null hypothesis of $\BPoi(\lambda_0,\lambda,\lambda)$-counts, i.e., a symmetric BPoi-distribution. The simulations use again the warp-speed method of \citet{giacomini13}, and the considered distributions are those of equation \eqref{DistSim} with sample sizes $n\in\{50, 100, 200, 500\}$ like before. However, by contrast to Section~\ref{Goodness-of-fit Testing}, the $\BPoi(\lambda_0,\lambda,\lambda)$-null is now satisfied only for scenarios ``BPoi-2'', ``BPoi-5'', and ``BPoi-7''. The alternatives ``BHerm-3'', ``BVB-2'', and ``BNB-2'', in turn, are particularly demanding, because although these count distributions are symmetric, they are no BPoi-distributions and hence violate the $\BPoi(\lambda_0,\lambda,\lambda)$-null anyway.

The simulation results in Table~\ref{tab_Stein_Sym_BPoi} show that again, all tests have empirical sizes (the rejection rates for scenarios ``BPoi-2'', ``BPoi-5'', and ``BPoi-7'') being reasonably close to the nominal 5\,\%-level. Regarding the power concerning asymmetry \emph{within} the BPoi-family (remaining BPoi-scenarios), we have increasing power for increasing sample size~$n$, where the best power is usually achieved by one of the $T_{3;f}$-tests, closely followed by the $T_{2;f_{1/2}}$-test. So the less specific $T_{3;f}$-statistic appears to be the preferable solution here. Things change, however, if the alternative is \emph{outside} the BPoi-family. If the alternative is not BPoi-distributed but still symmetric (scenarios ``BHerm-3'', ``BVB-2'', and ``BNB-2''), then the $T_{3;f}$-test is hardly powerful at all, which is plausible as its statistic solely focuses on symmetry. In fact, we conjecture that it might be possible in future research to develop a non-parametric implementation of the $T_{3;f}$-test, which allows one to test a symmetry-null irrespective of the apparent distribution family. For the remaining alternatives, where we are concerned with both non-BPoi and asymmetric distributions, the $T_{3;f}$-tests show improved power but still, they are clearly outperformed by any of the $T_{2;f}$-tests. Again, this is plausible as the $T_{2;f}$-statistic aggregates both the distributional deviations and the asymmetry. Here, it is interesting to note that in most cases, both instances~$T_{2;f_{1}}$ and~$T_{2;f_{1/2}}$ perform virtually the same. But for the BVB-alternatives being characterized by underdispersion, the weight function~$f_{1/2}$ appears to be beneficial. Altogether, the proposed symmetry tests constitute a promising approach for uncovering asymmetry, where the weight function~$f_{1/2}$ either leads to similar or even better results than~$f_1$. The latter is opposite to the GoF-tests in Section~\ref{Goodness-of-fit Testing}, where~$f_1$ is the preferable choice. In both cases, an in-depth analysis on the optimal choice of~$f$ appears to be a challenging task for future research.

\subsection{Real-World Data Analyses}
\label{Real-World Data Analyses}
We conclude the applications section by some real-world data examples. For comparability to the existing literature, we reanalyze all data examples presented by \citet{crockett79} (also see \citet{loukas86}) and \citet{best97}. These are
\begin{itemize}
	\item Data-1: number of plants of species ``Lacistema aggregatum'' and ``Protium guianense'' in 100 contiguous quadrats \citep[Table~2]{crockett79};
	\item Data-2: number of children suffering medically attended injuries in each of two different time periods, ``ages 4--7'' against ``ages 8--11'' \citep[Table~3]{crockett79};
	\item Data-3: accidents for 122 engine drivers in two consecutive periods, ``1937--1942'' against ``1943--1947'' \citep[Table~5]{best97};
	\item Data-4: synthetic data \citep[Table~6]{best97}, originally used to illustrate problems with the test statistics from \citet{crockett79} and \citet{loukas86}.
\end{itemize}
Important descriptive statistics of these data sets are provided by Table~\ref{tab_data examples}\,(a), where the estimate of $\mu=(\mu_1+\mu_2)/2$ is relevant for symmetry testing.

\begin{table}[th]
\centering\small
\caption{Data examples of Section~\ref{Real-World Data Analyses}: (a) sample size~$n$ and descriptive statistics; (b) test statistics (P-values in parentheses).}
\label{tab_data examples}

\smallskip
(a)~\begin{tabular}{l|rccccc|c}
\toprule
Data & \multicolumn{1}{c}{$n$} & $\mu_1$ & $\mu_2$ & $\sigma_1^2/\mu_1$ & $\sigma_2^2/\mu_2$ & $\rho$ & $\mu$ \\
\midrule
Data-1 & 100 & 0.940 & 0.640 & 1.415 & 1.216 & 0.276 & 0.790 \\
Data-2 & 621 & 1.126 & 1.325 & 1.350 & 1.483 & 0.185 & 1.225 \\
Data-3 & 122 & 1.270 & 0.975 & 1.301 & 1.330 & 0.259 & 1.123 \\
Data-4 & 50 & 0.620 & 0.520 & 1.112 & 1.353 & 0.846 & 0.570 \\
\bottomrule
\end{tabular}

\vspace{2ex}
(b)~\begin{tabular}{l|ccc|cccc}
\toprule
Data & $T^*$ & $T_{1;f_1}$ & $T_{1;f_{1/2}}$ & $T_{2;f_1}$ & $T_{2;f_{1/2}}$ & $T_{3;f_1}$ & $T_{3;f_{1/2}}$ \\
\midrule
Data-1 & 0.103 & 0.766 & 0.860 & 0.677 & 0.355 & 0.600 & 0.273 \\
 & \footnotesize (0.006) & \footnotesize (0.005) & \footnotesize (0.024) & \footnotesize (0.003) & \footnotesize (0.001) & \footnotesize (0.004) & \footnotesize (0.015) \\[1ex]
Data-2 & 0.172 & 0.708 & 0.845 & 0.883 & 0.234 & -0.399 & -0.160 \\
 & \footnotesize (0.000) & \footnotesize (0.000) & \footnotesize (0.000) & \footnotesize (0.000) & \footnotesize (0.000) & \footnotesize (0.001) & \footnotesize (0.003) \\[1ex]
Data-3 & 0.094 & 0.777 & 0.903 & 0.597 & 0.229 & 0.590 & 0.253 \\
 & \footnotesize (0.003) & \footnotesize (0.004) & \footnotesize (0.098) & \footnotesize (0.029) & \footnotesize (0.029) & \footnotesize (0.010) & \footnotesize (0.022) \\[1ex]
Data-4 & 0.083 & 0.859 & 0.770 & 0.082 & 0.099 & 0.200 & 0.141 \\
 & \footnotesize (0.126) & \footnotesize (0.318) & \footnotesize (0.110) & \footnotesize (0.531) & \footnotesize (0.127) & \footnotesize (0.061) & \footnotesize (0.038) \\
\bottomrule
\end{tabular}
\end{table}

The obtained test results, i.e., the values of the test statistics together with the corresponding P-values, are summarized in Table~\ref{tab_data examples}\,(b), where~$T^*$ was again executed based on the $\chi_2^2$-asymptotics, while the remaining tests rely on a parametric BPoi-bootstrap with 10,000 replications. For data sets~1 and~2, we obtain rejections by any of the tests on the 5\%-level, which confirms the earlier conclusions by \citet{crockett79,loukas86} and complements them by the additional results of the symmetry tests. So the data appear to not originate from a BPoi-distribution, in particular not from a symmetric BPoi-distribution, which is also confirmed by the unequal mean estimates as well as the dispersion ratios exceeding~1 in the first two rows of Table~\ref{tab_data examples}\,(a). The situation for data set~3 is analogous, but here, the $T_{1;f_{1/2}}$-test does not lead to a rejection on the 5\,\%-level. This confirms our conclusions from Section~\ref{Goodness-of-fit Testing}, where we recognized that the weight function~$f_{1/2}$ has poorer power than its competitors for small sample sizes~$n$. Finally, for data set~4, already \citet{best97} reported that none of their tests was significant, which is confirmed by all tests except~$T_{3;f}$. Generally, the data set is very small ($n=50$) and the deviations from a (symmetric) BPoi-distribution are rather mild, see the last row in Table~\ref{tab_data examples}\,(a), which explains that the tests are generally not powerful. That's why it is all the more surprising that the $T_{3;f_{1/2}}$-test leads to a rejection anyway, and also the P-value of the $T_{3;f_{1}}$-test is quite close to~0.05. In view of our findings in Sections~\ref{Goodness-of-fit Testing} and~\ref{Testing for Symmetry}, we may conclude that data set~4 originates from an asymmetric BPoi-distribution, where the $T_{3;f}$-test, and especially the $T_{3;f_{1/2}}$-test, turned out to be particularly powerful.

\section{Conclusions}
\label{Conclusions}
In this research, we derived novel Stein characterizations for the three most important types of bivariate count distributions, namely the bivariate Poisson, binomial, and negative-binomial distribution. We also demonstrated that these Stein characterizations offer a broad potential for applications in theory and practice. More precisely, we showed how the Stein identities can be utilized for moment calculations, where the novel recursive schemes for bivariate pmfs and factorial moments of BNB- and BVB-distribution appear to be particularly useful from a computational point of view. We also deduced test statistics for the goodness-of-fit or the symmetry of bivariate count distributions (exchangeability), which showed an appealing performance in simulations and real-world data examples.

Given the limited length of a research article, it was of course not possible for us to elaborate on the above application areas in full detail. Instead, it became clear that our novel Stein characterizations offer the opportunity for various follow-up research projects. Among others, the Stein tests for goodness-of-fit or symmetry deserve a more comprehensive analysis, e.g., with respect to optimal weight functions~$f$ or a non-parametric implementation of the $T_{3;f}$-test. Future research might also try to utilize our bivariate Stein characterizations for the remaining application areas reviewed in Section~\ref{Introduction}, such as model diagnostics for count time series analysis.

\subsubsection*{Acknowledgments}
Shaochen Wang was supported by the National Natural Science
Foundation of China (No.\ 12371266).


\end{document}